\documentclass[11pt]{article}

\usepackage[margin=0.78in]{geometry}
\usepackage[T1]{fontenc}
\usepackage{lmodern}
\usepackage{microtype}
\usepackage{amsmath,amssymb,amsthm,mathtools}
\usepackage{booktabs}
\usepackage[hidelinks]{hyperref}
\usepackage{setspace}
\usepackage{natbib}
\usepackage{graphicx}

\newtheorem{theorem}{Theorem} 
\newtheorem{proposition}{Proposition}
\newtheorem{corollary}{Corollary}

\newcommand{\E}{\mathbb E}

\newcommand{\1}{\mathbf 1}

\newcommand{\cK}{\mathcal K}

\newgeometry{margin=1.25in}

\title{Optimal Experimental Design and Estimation \\ when Potential Outcomes are Bounded}
\author{Peter Hull\thanks{Brown University Department of Economics; peter\_hull@brown.edu.\\ I thank P Aronow, Kirill Borusyak, Jimbo Brand, Jiafeng Chen, Paul Goldsmith-Pinkham, Evan Munro, and Jonathan Roth for several comments and conversations that helped shape this paper. OpenAI's ChatGPT also contributed valuable insights. Refine.ink was used to check for consistency and
clarity. }}
\date{August 2026}

\begin{document}
\maketitle

\begin{abstract}
\noindent I study the optimal design and analysis of randomized experiments for estimating finite-population average treatment effects when potential outcomes are known to be bounded, as with binary outcomes. Among all assignment mechanisms and a broad class of affine estimators, worst-case mean-squared error (MSE) is minimized by independent random assignment and an unconventional regression of the support-midpoint-centered outcome on the recentered treatment, with no intercept. This contrasts with the usual prescription of balanced complete randomization and difference-in-means estimation: when outcomes are bounded, randomness in the realized treatment share is informative. The worst-case gain over full-sample complete randomization is asymptotically small, but gains can be first-order relative to other designs: complete within-pair randomization and pair-fixed-effect regression have twice the worst-case MSE. I then consider the general problem, allowing for any measurable estimator. Independent random assignment remains optimal, and the generally-nonlinear optimal estimator can meaningfully reduce worst-case MSE.
\end{abstract}

\newpage
\onehalfspacing

\section{Introduction}

Complete randomization in experiments---where a fixed number of units are treated---is often viewed as an unambiguous efficiency improvement over independent random assignment. By fixing the treated share, it eliminates chance imbalances while preserving uniform randomness over unit labels. Recent minimax results formalize this intuition, selecting complete randomization and difference-in-means estimation of average treatment effects (ATEs) with unrestricted potential outcomes \citep{bai2023why,kallus2021optimality}.

This paper shows that the formal optimality of complete randomization depends on the researcher not having an \emph{a priori} bound on potential outcomes---as one has with binary or otherwise limited-support outcomes. I show that, given outcome bounds, complete randomization may no longer be minimax. Instead, worst-case mean-squared error (MSE) is minimized by independent random assignment---with a random treated share. Intuitively, when outcomes are unbounded complete randomization is preferred because variation in the treated share can be exploited by an adversary: making the level of potential outcomes arbitrarily large or small can blow up the estimator with even a small imbalance. By anchoring the level of potential outcomes, ex ante bounds prevent this and make a random treated share useful for ATE estimation. In particular, independent randomization avoids any pairwise correlation between treatment assignments which an adversary can exploit through the heterogeneity (rather than the level) of potential outcomes. 

I first show this reversal over a large class of affine estimators satisfying a natural condition, which ensures equivariance to midpoint-preserving shifts in treatment effects.\footnote{More precisely, the equivariance restriction says that increasing every treatment effect by some $t$ while holding each potential-outcome midpoint fixed must increase the estimate by exactly $t$.} This class includes simple or weighted difference-in-means, ordinary or weighted least squares with predetermined controls or fixed effects, and many other estimators.  I show independent randomization is minimax-optimal with an unconventional estimator: a regression of the support-midpoint-centered outcome on the recentered treatment with no intercept. Treatment recentering ensures design-based identification \citep{borusyakHull2023nonrandom}, while midpoint-centering the outcome makes the random treated share variation informative. I further show complete randomization is strictly suboptimal in this setting.

The worst-case MSE reduction of the optimal design and affine estimator is asymptotically small relative to full-sample complete randomization and the usual difference-in-means estimator, but it can be large relative to other popular strategies. Indeed, I show the procedure of paired randomization and pair-fixed-effect regression has twice the worst-case MSE by eliminating variation in treated shares within each pair. While the minimax analysis may be overly pessimistic for this case, as it doesn't incorporate  \emph{a priori} information on the possible similarity of potential outcomes within strata, it formalizes the sense in which repeated balance restrictions can be costly when the strata are not prognostically informative.

Finally, I consider  the general minimax problem which allows for any design and measurable estimator. I show that independent randomization remains minimax-optimal, while the optimal estimator is generally nonlinear. This estimator is characterized by a finite-dimensional convex program with quadratic constraints, and has a least-favorable-prior representation analogous to the  \cite{hodges1982minimax} estimator. It generally improves on the unbiased affine estimator by adding bias and reducing variance. Numerically, I find that using this optimal estimator decreases worst-case MSE by 15-30\% for moderate sample sizes, with the benefit declining asymptotically. I show how the expected MSE of different estimators, including optimal ones, can be computed in experiments with binary outcomes. 

This paper builds on several interconnected literatures. From statistics, a closely related literature studies optimal finite-population sampling and estimation of means. Early results study different restrictions on the population or the estimator class. \cite{godambe1955unified} shows that, for broad sampling designs and unrestricted population values, there is generally no unbiased linear estimator that uniformly minimizes variance.  \cite{godambeJoshi1965admissibility} establish the admissibility of the Horvitz–Thompson estimator within the class of design-unbiased estimators. \cite{bickelLehmann1981minimax} impose a bound on finite-population dispersion and show that the sample mean is minimax under simple random sampling. Most directly related is \cite{hodges1982minimax}, who show that if all finite-population values lie in a known interval then the minimax estimator under simple random sampling shrinks the sample mean toward the interval midpoint; moreover, simple random sampling paired with this estimator is minimax among all sampling and estimation strategies.\footnote{Other work develops minimax sampling theory for alternative parameter spaces and estimator classes, including \cite{joshi1979best}, \cite{chengLi1983minimax}, \cite{gabler1988conditional,gabler1990minimax}, and \cite{stenger1989asymptotic}.}  This paper adopts a similar logic for the estimation of causal effects.\footnote{\citet[][Lemma 4.1 and Proposition 4.2]{harshawEtAl2024balancing} provide a close antecedent. Fixing the Horvitz–Thompson estimator and restricting to designs with marginal treatment probability one-half, they show that worst-case MSE over an $\ell_2$ bound is minimized by independent randomization. This paper's Proposition \ref{prop1} strengthens this result by allowing arbitrary assignment mechanisms and optimizing jointly over a large set of affine estimators. Theorem \ref{theorem1} further generalizes by optimizing over all measurable estimators. }

The closest contemporary paper in this literature is \cite{aronowLopatto2026minimax}, who consider finite-population totals when each unit’s outcome has a known and potentially unit-specific support interval. Holding marginal inclusion probabilities fixed, they derive a sharp lower bound on the maximum MSE of any design-unbiased estimator and show it is attained precisely when the inclusion indicators are pairwise independent, in which case the optimal estimator is a midpoint-differenced Horvitz–Thompson estimator. They also jointly optimize the sampling design and design-unbiased estimator. My problem has the same bounded-support and midpoint-adjustment logic, but differs in two key ways. First, I study a causal setting where treatment assignment yields a nonstandard “one-from-each-pair” sampling from treated and untreated potential outcomes. Second, and more importantly, I impose no design-unbiasedness restriction and ultimately optimize over all estimators.\footnote{A related robust-design literature asks how randomization protects against misspecification or unknown dependence rather than bounded support. \cite{wu1981robustness} connects randomization to worst-case mean squared error protection in comparative experiments, while \cite{bickelHerzberg1979robustness} study designs robust to serially correlated errors. These papers provide an additional precedent for interpreting independent randomization as protection against an adversarial outcome configuration, although the settings are quite different.} 

More directly, this paper adds to the literature on optimal assignments and estimators for average treatment effects. Classic Neyman allocation fixes assigned shares to minimize the variance of differences in means given arm-specific outcome variances. Modern minimax results yield decision-theoretic foundations for related prescriptions. \cite{kallus2021optimality} shows complete randomization is minimax when the relevant conditional-mean class is permutation symmetric. \cite{bai2023why} studies joint optimization of assignment and linear estimation, obtaining difference-in-means at the Neyman allocation under his parameter class. My results are best viewed as complementing these analyses which leave outcome location unrestricted.\footnote{\cite{yamin2026when} also studies minimax experimental design with bounded outcomes, in a setting with pilot data and \emph{iid} sampling of potential outcomes. With difference-in-means estimation and fixed treatment counts, he shows balanced complete randomization is minimax absent informative pilot evidence; my analysis considers all estimators and designs while allowing for arbitrary dependence in fixed potential outcomes. My minimax analysis under bounds is also similar in spirit to \cite{deChaisemartin2024trading}, who assumes bounded strata-specific ATEs and derives the minimax linear combination of stratum-specific unbiased estimators. }  I show that a known outcome support can be exploited with a random treated share; conditional on the share, however, my optimal design is exactly the complete-randomization designs considered in earlier work.\footnote{Other work uses baseline information to improve on complete randomization: \cite{bai2022optimality} proves optimality of certain matched pairs among stratified designs that treat every unit with probability one half; \cite{hahnHiranoKarlan2011adaptive} and \cite{tabordMeehan2023stratification} use earlier experimental waves to choose treatment propensities and strata, and \cite{harshawEtAl2024balancing} construct a design that explicitly trades off balance against worst-case robustness.} 

A very close contemporary paper from this literature is \cite{sudijono2026sharp}, who independently study the unrestricted finite-population minimax problem and arrive at an equivalent characterization of the optimum under a change of parameterization. Their paper develops complementary aspects of the problem, deriving a sharp second-order expansion, an Airy-function characterization of the asymptotic minimax rule and least-favorable prior, and admissibility and dominance results for standard procedures. This paper instead emphasizes exact finite-sample results for natural affine estimator classes---including the result that complete randomization is minimax-suboptimal in this class, the factor-of-two worst-case cost of paired randomization, and the interpretation of the optimal procedures through midpoint centering and causal Hodges–Lehmann estimation.

Lastly, this paper builds on a recent literature on estimation with recentered estimators. \cite{borusyakHull2023nonrandom}  establish recentering for design-based identification of formula treatments, combining as-if-random assignments with other predetermined variables, while \cite{borusyakHull2026optimal} characterize efficient formula instruments for a given design. \cite{borusyakHullMunro2026robust} jointly optimize the design and recentered instrument under a minimax approximate-variance criterion. A motivating example in \cite{borusyakHullMunro2026robust}  shows that independent---rather than complete---randomization is minimax-optimal when the treatment formula is the assignment itself, using an asymptotic variance approximation and a particular class of well-behaved recentered instrumental variable estimators (see Section 2.3; see also Lemma 2 in \cite{borusyakHull2026optimal}). This paper derives a similar result for finite-sample MSE over a fully unrestricted class of estimators. 

The rest of this paper is organized as follows. The next section shows the initial result for the restricted affine estimator class. Section \ref{sec:general} then considers the most general problem. Section \ref{sec:conclusion} concludes. All proofs are given in the appendix. 

\section{Restricted Affine Estimators}\label{sec:affine}
Consider a fixed population of $N$ units. For each unit $i$, let $Y_i(0)$ and $Y_i(1)$ be fixed potential outcomes and let $D_i\in\{0,1\}$ denote treatment assignment. A researcher knows, prior to assignment, that potential outcomes are bounded: for known $L<U$,
\begin{align*}
Y_i(0),Y_i(1) \in [L,U],\hspace{0.3cm} i=1,\dots,N.
\end{align*}
For example, when studying a binary outcome the researcher knows $L=0$ and $U=1$. The parameter of interest is the finite-population ATE: 
\begin{align*}
\beta = \frac{1}{N}\sum_{i=1}^N (Y_i(1)-Y_i(0)).
\end{align*}

To estimate $\beta$, the researcher first chooses a design $\delta\in \Delta (\{0,1\}^N)$, where $\Delta(\cdot)$ denotes the simplex, along with an estimator. Treatment assignments are drawn from $\delta$ and determine outcomes $Y_i=Y_i(0)(1-D_i)+Y_i(1)D_i$. The researcher then applies the estimator to $(D_i,Y_i)_{i=1}^N$, yielding an estimate $\hat\beta$. Here we restrict to affine estimators, of the form:
\begin{align*}
\hat\beta = a(D)+b(D)^\prime Y,
\end{align*}
where $D$ and $Y$ are $N\times 1$ vectors of the assignments and outcomes, $a(\cdot)$ is a fixed function, and $b(\cdot)$ is an $N\times 1$ vector of fixed functions. The goal of the researcher is to pick the design and estimator to minimize worst-case MSE over the unknown potential outcomes. Formally, for $Y(\cdot)=(Y_i(0),Y_i(1))_{i=1}^N$, they solve:
\begin{align}
\inf _{(\delta,\hat\beta)}\sup_{Y(\cdot)\in [L,U]^{2N}}\E_\delta \left[(\hat\beta - \beta)^2\right].\label{eq:objective}
\end{align}
\noindent It's worth noting that while potential outcomes are treated here as fixed, nothing would change in the minimax analysis if $Y(\cdot)$ were instead jointly random, provided its joint distribution is otherwise unrestricted.\footnote{This is simply because, writing $\mathcal{P}$ as the set of joint distributions supported on $[L,U]^{2N}$, the inner problems are equivalent: $\sup_{P\in \mathcal{P}}\mathbb{E}_{P,\delta} \left[(\hat\beta - \beta)^2\right]=\sup_{Y(\cdot)\in[L,U]^{2N}}\E_\delta \left[(\hat\beta - \beta)^2\right]$.} By contrast, additional \emph{a priori} restrictions on the distribution of $Y(\cdot)$ could change the optimal design; with \emph{iid} sampling, for example, minimax experimental designs can become unnatural  \citep{borusyakHullMunro2026robust}. 

We first analyze this problem for the large set of affine estimators that satisfy a natural equivariance condition: 
\begin{align}
b(D)^\prime \left(D-\frac{1}{2}\mathbf{1}\right)=1,\hspace{0.3cm} \delta\text{-almost surely}\label{eq:equivariance}
\end{align}
\noindent To see why this condition is natural, note that we can rewrite the outcome equation as:
\begin{align*}
Y_i = \frac{Y_i(1)+Y_i(0)}{2}+(Y_i(1)-Y_i(0))\left(D_i-\frac{1}{2}\right).
\end{align*}
Thus, under equation \eqref{eq:equivariance}, shifting all treatment effects by some constant $t$ while keeping the location (i.e., midpoint) of potential outcomes fixed increases the estimate by exactly $t$: 
\begin{align*}
\hat\beta^{new} &= a(D) + \sum_ib(D)_i \left(\frac{Y_i(1)+Y_i(0)}{2}+(Y_i(1)-Y_i(0)+t)\left(D_i-\frac{1}{2}\right)\right) \\
&= a(D)+b(D)^\prime Y + b(D)^\prime\left(D-\frac{1}{2}\mathbf{1}\right)\times t = \hat\beta + t.
\end{align*} 
In this sense, equation \eqref{eq:equivariance} ensures the estimator is equivariant to midpoint-preserving treatment effect shifts. One can verify that it is satisfied for many common estimators---including simple or weighted difference-in-means, ordinary or weighted least squares with an intercept and possibly other predetermined controls or fixed effects, and matching estimators that match on predetermined strata---with the analogous equivariance holding for many nonlinear procedures such as differences of medians or trimmed means.\footnote{The general condition can be stated, for estimators $\hat\beta(D,Y)$, as $\hat\beta(D,Y+t(D-\frac{1}{2}\mathbf{1}))=\hat\beta(D,Y)+t$.}

The first result shows independent randomization is optimal for this class of affine estimators, with an unconventional estimator: 
\begin{proposition}\label{prop1}
Let $M=\frac{U+L}{2}$. Restricting to affine estimators satisfying equation \eqref{eq:equivariance}, 
\begin{align*}
\inf _{(\delta,\hat\beta)}\sup_{Y(\cdot)\in [L,U]^{2N}}\E_\delta \left[(\hat\beta - \beta)^2\right] = \frac{(U-L)^2}{N}\equiv V^*.
\end{align*}
A minimax procedure is independent random assignment, $D_i\stackrel{iid}{\sim}Bernoulli(0.5)$, with the unbiased estimator
\begin{align}
\hat \beta^*  = \frac{\sum_i (D_i-\frac{1}{2})(Y_i-M)}{\sum_i (D_i-\frac{1}{2})^2} = \frac{2}{N}\sum_{i=1}^N (2D_i-1)(Y_i-M).\label{eq:betastar}
\end{align}
\end{proposition}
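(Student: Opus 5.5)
The proof has two halves. First, an upper bound: independent Bernoulli$(0.5)$ assignment with $\hat\beta^*$ has worst-case MSE at most $V^*$. Second, a matching lower bound: every design and every affine estimator satisfying \eqref{eq:equivariance} has worst-case MSE at least $V^*$. Throughout I will use the midpoint/effect reparameterization from the text. Write $m_i=\frac{Y_i(1)+Y_i(0)}{2}$, $\tau_i=Y_i(1)-Y_i(0)$ and $c_i=m_i-M$, and note that $|c_i|\le \frac{U-L}{2}$ whenever $Y(\cdot)\in[L,U]^{2N}$. Then $Y_i-M=c_i+\tau_i(D_i-\frac12)$.

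\emph{Upper bound.} Set $S_i=2D_i-1\in\{\pm1\}$, so that $D_i-\frac12=S_i/2$ and $S_i^2=1$. Substituting the reparameterization into \eqref{eq:betastar} gives
\begin{align*}
\hat\beta^*=\frac{2}{N}\sum_i S_i c_i+\frac{1}{N}\sum_i \tau_i S_i^2=\beta+\frac{2}{N}\sum_i S_i c_i .
\end{align*}
Under independent assignment the $S_i$ are i.i.d.\ Rademacher. Hence $\hat\beta^*$ is unbiased, and its MSE equals $\frac{4}{N^2}\sum_i c_i^2\le \frac{4}{N^2}\cdot N\frac{(U-L)^2}{4}=V^*$. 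The second expression for $\hat\beta^*$ in \eqref{eq:betastar} follows because $\sum_i(D_i-\frac12)^2=N/4$ identically. I will also check that $\hat\beta^*$ satisfies \eqref{eq:equivariance}: here $b_i=\frac{2}{N}S_i$, so $b'(D-\frac12\mathbf 1)=\frac{1}{N}\sum_i S_i^2=1$.

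\emph{Lower bound.} Fix any design $\delta$ and any affine $(a,b)$ satisfying \eqref{eq:equivariance}. I bound the supremum over $Y(\cdot)$ below by an average over a prior. Draw $\varepsilon_i$ i.i.d.\ Rademacher, independent of $D$, and set $Y_i(0)=Y_i(1)=M+h\varepsilon_i$ with $h=\frac{U-L}{2}$. These potential outcomes take the values $L$ or $U$, so the configuration is feasible. It has $\beta=0$ and $Y=M\mathbf 1+h\varepsilon$, so $\hat\beta-\beta=a(D)+Mb(D)'\mathbf 1+h\,b(D)'\varepsilon$. Conditioning on $D$ and averaging over $\varepsilon$ kills the cross term, and we obtain
\begin{align*}
\E_\varepsilon\left[(\hat\beta-\beta)^2\mid D\right]=(a(D)+Mb(D)'\mathbf 1)^2+h^2\|b(D)\|^2\ge h^2\|b(D)\|^2 .
\end{align*}
Constraint \eqref{eq:equivariance} together with Cauchy–Schwarz gives $1=b(D)'(D-\frac12\mathbf 1)\le\|b(D)\|\cdot\sqrt{N}/2$, because every coordinate of $D-\frac12\mathbf 1$ is $\pm\frac12$. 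Thus $\|b(D)\|^2\ge 4/N$ $\delta$-almost surely, so the conditional MSE is at least $h^2\cdot 4/N=V^*$ for $\delta$-almost every $D$.

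Taking expectations over $D\sim\delta$ shows that the prior-averaged MSE is at least $V^*$. The supremum over $Y(\cdot)$ is at least this prior average, so it is also at least $V^*$, for every design and every admissible estimator. Combined with the upper bound, this gives the stated value and the minimaxity of the proposed procedure. I expect the only real step to be choosing the least-favorable prior in the lower bound. Two features make it work. Zero treatment effects neutralize any use the estimator could make of $\beta$, and random midpoints at the support endpoints force a variance of $h^2\|b\|^2$ that the equivariance constraint bounds below uniformly in $D$. This uniformity is why the design drops out of the lower bound.
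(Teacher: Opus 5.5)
Your proposal is correct and takes essentially the same route as the paper. The lower bound averages risk over the $2^N$ no-effect endpoint schedules $Y_i(0)=Y_i(1)=M\pm\frac{U-L}{2}$, which reduces risk to at least $\frac{(U-L)^2}{4}\|b(D)\|^2$, bounded below via \eqref{eq:equivariance} and Cauchy--Schwarz; the upper bound uses the Rademacher decomposition $\hat\beta^*-\beta=\frac{2}{N}\sum_i(2D_i-1)(\mu_i-M)$. Your explicit check that $\hat\beta^*$ itself satisfies \eqref{eq:equivariance} is a small step the paper leaves implicit.
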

\noindent The appendix proof is straightforward. It shows that, for any design and any estimator satisfying \eqref{eq:equivariance}, the adversary can generate an MSE of at least $(U-L)^2/N$ by placing both potential outcomes at one of the two support endpoints: i.e., selecting a $B_i\in\{L,U\}$ for each unit $i$ and setting $Y_i(0)=Y_i(1)=B_i$. If treatment assignments are correlated across units, the adversary can choose the endpoint pattern to align with that dependence and increase MSE above the lower bound. Independent random assignment eliminates this possibility and attains the $V^*$ bound when paired with the optimal estimator $\hat\beta^*$.

The $\hat\beta^*$ estimator is nonstandard. Equation \eqref{eq:betastar} shows it arises from an ordinary least squares regression of the support-midpoint-centered outcome $Y_i-M$ on the recentered treatment $D_i-1/2$, with no intercept. Equivalently, it is the Horvitz–Thompson ATE estimator applied to the centered outcome.\footnote{The second expression in \eqref{eq:betastar} can also be written $\hat\beta^*=\frac{1}{N}\sum_i \left(\frac{D_i(Y_i-M)}{\pi}-\frac{(1-D_i)(Y_i-M)}{1-\pi}\right)$ for $\pi=1/2$.} While omitting an intercept from a regression or using unnormalized Horvitz-Thompson weights is often undesirable, here it is exactly what allows the estimator to attain the minimax bound under independent randomization, by avoiding the slight negative correlation in observation weights induced by demeaning. Recentering the treatment by its known propensity score of $1/2$ ensures unbiasedness, while centering the outcome at the support midpoint anchors outcomes, leaving only midpoint deviations that cannot be exploited by the adversary under independent randomization. 

A closely related result shows that complete randomization is strictly suboptimal:
\begin{proposition}\label{prop2}
Let $N$ be even and restrict to affine estimators satisfying equation \eqref{eq:equivariance}. Then if $\delta$ is chosen to be balanced complete randomization,
\begin{align*}
\inf _{\hat\beta}\sup_{Y(\cdot)\in [L,U]^{2N}}\E_\delta \left[(\hat\beta - \beta)^2\right] \ge \frac{(U-L)^2}{N-1} =\frac{N}{N-1} V^*.
\end{align*}
Moreover, the difference-in-means estimator attains this bound. 
\end{proposition}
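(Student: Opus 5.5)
The plan has two parts: a lower bound from a least-favorable prior over ``no-effect'' potential outcomes, exactly as in the proof of Proposition \ref{prop1}, and a matching upper bound from the classical Neyman variance formula. Let $h=(U-L)/2$, so the target is $4h^2/(N-1)$. For the lower bound I restrict the adversary to $Y_i(0)=Y_i(1)=B_i=M+h s_i$ with $s\in\{-1,1\}^N$. Then $\beta=0$ and the error is
\begin{align*}
\hat\beta-\beta = c(D)+h\, b(D)'s, \qquad c(D)=a(D)+M\, b(D)'\mathbf{1}.
\end{align*}
Since the supremum over $Y(\cdot)$ is at least the average under any prior on $s$, it suffices to pick a distribution for $s$, independent of $D$, under which the Bayes risk is at least $4h^2/(N-1)$.

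The key choice is the prior. Proposition \ref{prop1} used i.i.d.\ Rademacher signs, which only yields $4h^2/N$. Here I instead draw $s$ uniformly among sign vectors with exactly $N/2$ entries equal to $+1$, which mirrors the design. Then $\E[s]=0$, so the cross term with $c(D)$ vanishes, and a direct computation gives
\begin{align*}
\operatorname{Cov}(s)=\frac{N}{N-1}\left(I-\frac{1}{N}\mathbf{1}\mathbf{1}'\right).
\end{align*}
Hence, conditional on $D$, the Bayes risk is
\begin{align*}
c(D)^2+h^2\frac{N}{N-1}\|b_\perp(D)\|^2 \;\ge\; h^2\frac{N}{N-1}\|b_\perp(D)\|^2,
\end{align*}
where $b_\perp$ is the projection of $b$ onto the orthogonal complement of $\mathbf{1}$.

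Next I use complete randomization. Because exactly $N/2$ units are treated, $\mathbf{1}'(D-\tfrac12\mathbf{1})=0$, so the equivariance condition \eqref{eq:equivariance} reads $b_\perp(D)'(D-\tfrac12\mathbf{1})=1$ almost surely. Cauchy--Schwarz with $\|D-\tfrac12\mathbf{1}\|^2=N/4$ then gives $\|b_\perp(D)\|^2\ge 4/N$ for every realized $D$. The Bayes risk is therefore at least $4h^2/(N-1)=(U-L)^2/(N-1)$, uniformly over affine equivariant estimators.

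For attainment I take difference-in-means, which is affine with $a=0$ and $b_i=(2/N)(2D_i-1)$, and which satisfies \eqref{eq:equivariance}. I invoke the classical Neyman finite-population variance under complete randomization:
\begin{align*}
\operatorname{Var}=\frac{S_1^2}{N/2}+\frac{S_0^2}{N/2}-\frac{S_\tau^2}{N}\le \frac{2}{N}\left(S_1^2+S_0^2\right),
\end{align*}
where the $S^2$ are variances with $N-1$ denominators. Values in $[L,U]$ have $\sum_i(x_i-\bar x)^2\le Nh^2$, so each $S_d^2\le Nh^2/(N-1)$, and the MSE is at most $4h^2/(N-1)$.

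The main obstacle is finding the right prior for the lower bound: the naive i.i.d.\ prior loses the factor $N/(N-1)$. The fix is a prior that matches the design's sum-zero constraint, which exploits the fact that equivariance only constrains $b_\perp$.
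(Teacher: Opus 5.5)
Your proposal is correct and matches the paper's proof. The lower bound is the same: the uniform prior over balanced no-effect sign patterns with covariance $\frac{N}{N-1}(I-\frac1N\mathbf{1}\mathbf{1}')$, followed by Cauchy--Schwarz on the weights projected off $\mathbf{1}$. For attainment, the paper computes the risk of difference-in-means directly from $\E_\delta[HH']=\frac{N}{N-1}(I-\frac1N\mathbf{1}\mathbf{1}')$, whereas you use Neyman's variance formula, drop the $-S_\tau^2/N$ term, and bound $S_0^2$ and $S_1^2$ by the range. The two calculations are equivalent, since Neyman's expression reduces exactly to $\frac{4}{N}S_\mu^2$, and your sandwich with the lower bound does give equality.
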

\noindent The optimality of difference-in-means for complete randomization is unsurprising; the new insight is that this conventional estimator-design pair has strictly higher worst-case MSE than the $V^*$ bound. Intuitively, complete randomization generates a slight negative correlation between treatment assignment pairs which the adversary can exploit by making exactly half of the $B_i$ equal to $L$ and half equal to $U$.  Averaging over such endpoint patterns makes MSE proportional to the squared magnitude of the difference-in-means outcome weights $b(D)$, with the usual $N/(N-1)$ finite-population factor from sampling without replacement. 

The $N/(N-1)$ factor inflating worst-case MSE with complete randomization is, of course, small for even moderately large populations; repeated exact-balance restrictions, however, can produce a first-order loss. To illustrate, consider paired randomization with $J$ predetermined pairs and the conventional pair-fixed-effect regression estimator. Here there are $J$ exact-balance restrictions with exactly one unit selected for treatment in each pair, and assignments are perfectly negatively correlated within pairs. The adversary can exploit this restriction by choosing a no-effect schedule in which one unit’s potential outcomes equal $L$ and the other’s equal $U$ in every pair. Each pair then contributes an independent estimation error of magnitude $U-L$, yielding worst-case MSE that is twice that of $V^*$. 
\begin{proposition}\label{pair_prop}
Suppose $N = 2J$ units are partitioned into $J$ pairs, indexed by $(j, 1)$ and $(j, 2)$. Restrict to affine estimators satisfying \eqref{eq:equivariance} and set $\delta$ to paired randomization with exactly one unit in each pair assigned with probability $1/2$, independently across pairs. Then:
\begin{align*}
\inf_{\hat\beta}\sup_{Y(\cdot)\in [L,U]^{2N}}\E_\delta \left[(\hat\beta - \beta)^2\right] \ge \frac{(U-L)^2}{J} =2 V^*.
\end{align*}
Moreover, the pair-fixed-effect regression estimator attains this bound. 
\end{proposition}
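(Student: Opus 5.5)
Fix any affine estimator $\hat\beta=a(D)+b(D)'Y$ satisfying \eqref{eq:equivariance} under the paired design. Following the adversarial construction described after Proposition \ref{prop1}, I will restrict the adversary to no-effect schedules $Y_i(0)=Y_i(1)=B_i$ with $B_i\in\{L,U\}$, so that $\beta=0$ and $Y=B$ regardless of $D$. Write $B_i=M+\frac{U-L}{2}s_i$ with $s_i\in\{-1,1\}$. Then
\begin{align*}
\hat\beta-\beta = \underbrace{a(D)+M\,b(D)'\mathbf{1}}_{c(D)} + \frac{U-L}{2}\,b(D)'s .
\end{align*}
The adversary will draw $s$ at random, and worst-case MSE is at least the average MSE over this prior. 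The prior is the paired one suggested in the text: within each pair $j$, set $s_{j,2}=-s_{j,1}$, and draw $s_{j,1}$ as independent Rademacher signs across pairs. Averaging over $s$ kills the cross term with $c(D)$ (since $\E[s]=0$), so the averaged MSE is at least $\frac{(U-L)^2}{4}\E_\delta\E_s[(b(D)'s)^2]$. Given this prior, $\E_s[(b's)^2]=\sum_j (b_{j,1}-b_{j,2})^2$.

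Next I use equivariance to lower bound $\sum_j (b_{j,1}-b_{j,2})^2$ pointwise in $D$. Under paired randomization, $D_{j,1}-\frac12=\epsilon_j/2$ and $D_{j,2}-\frac12=-\epsilon_j/2$ with $\epsilon_j\in\{\pm1\}$. Hence \eqref{eq:equivariance} reads
\begin{align*}
\tfrac12\sum_j \epsilon_j (b_{j,1}-b_{j,2})=1 .
\end{align*}
By Cauchy--Schwarz, $4\le J\sum_j(b_{j,1}-b_{j,2})^2$, so $\sum_j (b_{j,1}-b_{j,2})^2\ge 4/J$ for $\delta$-almost every $D$. Plugging in gives averaged MSE at least $\frac{(U-L)^2}{4}\cdot\frac{4}{J}=\frac{(U-L)^2}{J}=2V^*$. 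Since the supremum dominates any average, this yields the stated lower bound for every admissible $\hat\beta$. The key point is that paired assignment makes the constraint see only within-pair differences of $b$, which is exactly what the antithetic sign prior exploits.

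For attainment, I will take the pair-fixed-effect regression estimator, which equals the average of within-pair treated-minus-control differences: $\hat\beta=\frac1J\sum_j \epsilon_j (Y_{j,1}-Y_{j,2})$. I first check that it is affine with $a=0$ and $b_{j,1}=\epsilon_j/J=-b_{j,2}$, and that it satisfies \eqref{eq:equivariance}. It is unbiased under the design, and the pair contributions are independent. Pair $j$'s contribution has mean $\tau_j$, equal to the pair-average effect, and
\begin{align*}
\text{variance}=\frac{1}{J^2}\left(\frac{Y_{j,1}(1)-Y_{j,2}(0)+Y_{j,2}(1)-Y_{j,1}(0)}{2}\right)^2 .
\end{align*}
Each numerator term lies in $[-(U-L),U-L]$, so each variance is at most $(U-L)^2/J^2$. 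The MSE is therefore at most $(U-L)^2/J$, matching the bound.

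The main obstacle is confirming that the sign-averaging step is legitimate for arbitrary $a(D)$ and $D$-dependent $b(D)$. This works because the prior over $s$ is independent of $D$ and the constraint bound holds pointwise in $D$. The remaining care points are the exact variance expression in the attainment step and the identification of the fixed-effect regression with the paired difference estimator.
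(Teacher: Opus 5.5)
Your lower-bound argument is the paper's own. It uses the same antithetic no-effect prior ($s_{j,2}=-s_{j,1}$, independent Rademacher signs across pairs), the same reduction of \eqref{eq:equivariance} to $\frac12\sum_j\epsilon_j(b_{j,1}-b_{j,2})=1$, and the same Cauchy--Schwarz step, and it is correct. The same holds for your identification of the pair-fixed-effect estimator with $\frac1J\sum_j\epsilon_j(Y_{j,1}-Y_{j,2})$, with $b_{j,1}=-b_{j,2}=\epsilon_j/J$, and for your checks of equivariance and unbiasedness.

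The attainment step, however, displays a wrong variance formula.

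\begin{itemize}
\item Pair $j$'s contribution takes the values $x_+=\frac1J\{Y_{j,1}(1)-Y_{j,2}(0)\}$ and $x_-=\frac1J\{Y_{j,2}(1)-Y_{j,1}(0)\}$, each with probability $1/2$. Its variance is therefore $\{(x_+-x_-)/2\}^2$.
\item What you wrote is $\{(x_++x_-)/2\}^2$, which is the squared mean $(\tau_j/J)^2$. Relatedly, the mean of the contribution is $\tau_j/J$, not $\tau_j$.
\item The correct variance is
\[
\frac{1}{J^2}\left(\frac{Y_{j,1}(1)+Y_{j,1}(0)-Y_{j,2}(1)-Y_{j,2}(0)}{2}\right)^2=\frac{1}{J^2}(\mu_{j,1}-\mu_{j,2})^2,
\]
where $\mu_{j,r}=\{Y_{j,r}(1)+Y_{j,r}(0)\}/2$ are the potential-outcome midpoints. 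The paper gets this directly by writing $\hat\beta_{\rm PFE}-\beta=\frac1J\sum_jH_j(\mu_{j1}-\mu_{j2})$.
\end{itemize}

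Your expression cannot be right: it equals $(\tau_{j,1}+\tau_{j,2})^2/(4J^2)$, which depends only on treatment effects. It would therefore give zero MSE on the no-effect schedules ($U$ and $L$ within each pair) that your own lower bound uses to force MSE $(U-L)^2/J$.

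The conclusion survives the correction. The difference $\mu_{j,1}-\mu_{j,2}$ is the average of $Y_{j,1}(1)-Y_{j,2}(1)$ and $Y_{j,1}(0)-Y_{j,2}(0)$, each lying in $[-(U-L),U-L]$. Summing over the $J$ independent pairs bounds the MSE by $(U-L)^2/J$, with equality at that no-effect schedule.
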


A minimax analysis is likely too pessimistic for the expected MSE of such a paired randomization strategy, as it does not incorporate any \emph{a priori} information on the similarity of potential outcomes within pairs. In practice, stratification is typically based on such information and can significantly improve MSE (see, e.g., Bai 2022; Tabord-Meehan 2023). In such settings, Proposition \ref{pair_prop} can be viewed as a formalization of the idea that stratification can come at a high MSE cost when the strata are not overly informative about potential outcomes---such that highly heterogeneous potential outcomes could coexist within strata.

\section{General Problem}\label{sec:general}

We now consider the minimax problem \eqref{eq:objective} without restricting the class of estimators. The main result shows that independent randomization remains optimal while the optimal estimator is generally nonlinear:
\begin{theorem}\label{theorem1}
For nonnegative integers $k,m$ with $k+m\leq N$, define $\theta_{k,m}=\frac{2k+m}{N}$ and 
\begin{align*}
p_{k,m}(x)=2^{-m}\binom{m}{x-k},
\quad x=0,\ldots,N,
\end{align*}
where $\binom{m}{j}=0$ when $j\notin\{0,\ldots,m\}$. Let $\cK_N$ be the set of all such $(k,m)$ and define
\begin{align}\label{eq:problem}
\kappa_N
=
\min_{d=(d_0,\ldots,d_N)\in[0,2]^{N+1}}
\max_{(k,m)\in\cK_N}
\sum_{x=0}^N
p_{k,m}(x)(d_x-\theta_{k,m})^2.
\end{align}
Then, over all measurable estimators,
\begin{align*}
\inf _{(\delta,\hat\beta)}\sup_{Y(\cdot)\in [L,U]^{2N}}\E_\delta \left[(\hat\beta - \beta)^2\right] =  (U-L)^2 \kappa_N.
\end{align*}
A minimax procedure is independent random assignment, with the estimator
\begin{align*}
\hat\beta^*_{NL} = \hat\beta^* + (U-L)\sum_{x=0}^N\left(d_x^*-\frac{2x}{N}\right)p_x(Z)
\end{align*}
where $\hat\beta^*$ is the optimal affine estimator in Proposition \ref{prop1},  $d^*=(d_0^*,\dots,d_N^*)$ is any solution to \eqref{eq:problem}, $p_x(z)=Pr(\sum_i B_i=x)$ for $B_i\stackrel{ind}{\sim}Bernoulli(z_i)$, and $Z=(Z_i)_{i=1}^N$ for
\begin{align*}
Z_i=\frac{1}{2}+\frac{(2D_i - 1)(Y_i-\frac{U+L}{2})}{U-L}\in[0,1].
\end{align*}
\end{theorem}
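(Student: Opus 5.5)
The plan is to normalize, reduce everything to a one-dimensional sufficient statistic, and then match a Bayesian lower bound (valid for any design) with an upper bound achieved by independent randomization and $\hat\beta^*_{NL}$. Because the problem is equivariant under affine maps of outcomes, I will take $L=0$, $U=1$ without loss of generality; the factor $(U-L)^2$ is restored at the end.

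\textbf{Reparametrization.} Write $a_i=Y_i(1)$ and $c_i=1-Y_i(0)$, both in $[0,1]$. Then $Z_i=a_i$ if $D_i=1$ and $Z_i=c_i$ if $D_i=0$. The normalized target is
\begin{align*}
\beta+1=\frac{1}{N}\sum_i (a_i+c_i)\equiv\theta\in[0,2].
\end{align*}
A direct computation shows $\hat\beta^*+1=\frac{2}{N}\sum_i Z_i$. Hence $\hat\beta^*_{NL}+1=\sum_x d^*_x p_x(Z)$, which is the form I will analyze. The problem is now symmetric in the pair $(a_i,c_i)$ and in unit labels.

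\textbf{Lower bound for any design.} Fix $(k,m)\in\cK_N$ and build a prior as follows. Choose uniformly at random a set of $k$ units with $a_i=c_i=1$ and a disjoint set of $m$ units with exactly one of $(a_i,c_i)$ equal to $1$; which one is decided by an independent fair coin. All remaining units get $a_i=c_i=0$. Under every such configuration $\theta=\theta_{k,m}$.

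Fix any design $\delta$ and condition on $D$. Since each side coin is fair and independent of $D$, the observed binary vector $Z$ is exchangeable with law independent of $D$, and its sum is $X=k+\mathrm{Bin}(m,1/2)$, i.e.\ $X\sim p_{k,m}$. So $(D,Z)$ is ancillary beyond $X$, and $X$ is sufficient for $(k,m)$ under any mixture $\pi$ over $\cK_N$. The Bayes risk is therefore at least
\begin{align*}
\min_d \sum_{(k,m)}\pi(k,m)\sum_x p_{k,m}(x)\,(d_x-\theta_{k,m})^2 .
\end{align*}
Projecting $d_x$ onto $[0,2]$ can only lower this, since every $\theta_{k,m}\in[0,2]$. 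The game between $d\in[0,2]^{N+1}$ and $\pi$ is convex in $d$, linear in $\pi$, and both sets are compact. By the minimax theorem, $\sup_\pi$ of the Bayes risk equals $\kappa_N$. Since worst-case risk dominates Bayes risk, every $(\delta,\hat\beta)$ has worst-case MSE at least $\kappa_N$. Independent randomization is never used here, which is why the design question disappears.

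\textbf{Upper bound.} Under iid Bernoulli$(1/2)$ assignment, $\sum_x d_x p_x(Z)=\E[d_{X}\mid Z]$, where $X=\sum_i B_i$ and $B_i\sim\mathrm{Bernoulli}(Z_i)$ independently. By Jensen (Rao--Blackwell), the MSE of $\E[d_X\mid Z]$ is at most that of $d_X$. Unconditionally the $B_i$ are independent with $\Pr(B_i=1)=q_i\equiv(a_i+c_i)/2$, so $X$ is Poisson-binomial and $\theta=\frac{2}{N}\sum_i q_i$. The risk of $d_X$ is then a function $R(q)$ on $[0,1]^N$, and it remains to show
\begin{align*}
\sup_{q\in[0,1]^N} R(q)\le \max_{(k,m)\in\cK_N}\sum_x p_{k,m}(x)\,(d^*_x-\theta_{k,m})^2,
\end{align*}
i.e.\ that the worst case is attained with every $q_i\in\{0,\tfrac12,1\}$.

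\textbf{Main obstacle: the extreme-point step.} I would proceed in two stages.
\begin{itemize}
\item[(i)] \emph{Pairwise merging.} Fix all coordinates except $q_i,q_j$ and hold $q_i+q_j$ fixed. Then $\theta$ is fixed, and the law of $B_i+B_j$ (hence $R$) is affine in the product $q_iq_j$. So $R$ is maximized either at $q_i=q_j$ or with one coordinate in $\{0,1\}$. Iterating, a worst case can be taken with $k$ coordinates at $1$, $m$ at a common value $r$, and the rest at $0$.
\item[(ii)] \emph{Pinning $r=\tfrac12$.} The distributions $k+\mathrm{Bin}(m,r)$ with $\theta=(2k+mr)/N$ are not obviously dominated by the $(k,m)$ family at $r=1/2$.
\end{itemize}
For stage (ii), I would first go back to the unrounded $(a_i,c_i)$ parametrization. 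The idea is to write a unit with $q_i=r<1/2$ as the configuration $(a_i,c_i)=(2r,0)$, and then use symmetry and convexity in $(a_i,c_i)$ of the conditional risk, given the other units, to push $a_i$ to $\{0,1\}$.

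If that fails, I would instead treat $(k,m,r)$ as the adversary's parameter and $d$ as the statistician's. The goal would be to verify that at the minimax $d^*$ the least-favorable prior from the lower bound also certifies $r\in\{0,1/2,1\}$. Concretely, I would check the first-order condition of $R$ in $r$, using the fact that $\kappa_N$ is a saddle value. I expect this step to carry essentially all the difficulty of the proof.
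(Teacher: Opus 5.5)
\textbf{Lower bound.} Your lower bound is sound, and it takes a different route from the paper's. You randomize which endpoint each mixed unit occupies. That makes the law of $Z$ independent of $D$ and $X=\sum_iZ_i$ sufficient, so the bound holds for every design directly. The paper instead first symmetrizes over swaps and permutations to reduce to independent randomization with symmetric functions of $Z$.

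\textbf{Upper bound: the target inequality is false.} The gap is in the upper bound. You reduce it to $\sup_{q\in[0,1]^N}R(q)\le\kappa_N$ for the randomized rule $d^*_X$, with $X$ Poisson-binomial$(q)$. That inequality is false, so stage (ii) cannot be completed. For each $i$, $R$ is quadratic in $q_i$ (the cubic terms cancel), with leading coefficient $\frac4N\bigl(\frac1N-\E[d_{X_{-i}+1}-d_{X_{-i}}]\bigr)$. This is negative whenever $d$ rises faster than $1/N$ per step, so interior maxima with $q_i\neq\frac12$ occur.

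Concretely, take $N=3$. The minimax rule is $d^*=(u,\frac56,\frac76,2-u)$ with $u=(\sqrt{17}-2)/6$ and $\kappa_3=u^2\approx0.1252$. The active states are $(0,0),(3,0),(0,1),(2,1),(0,3)$. The least-favorable prior charges $(0,3)$, which has full support, so this minimax rule is unique. Now take $q=(\frac14,0,0)$, e.g.\ $(a_1,c_1)=(\frac12,0)$ with the other units at $(0,0)$. Then $\theta=\frac16$ and
\[
R(q)=\tfrac34\bigl(u-\tfrac16\bigr)^2+\tfrac14\bigl(\tfrac23\bigr)^2\approx 0.1374>\kappa_3 .
\]
The Jensen step $\E[d_X\mid Z]\to d_X$ adds exactly the extra variance that makes non-vertex points look worse than vertices. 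The pairwise-merging stage is correct but cannot rescue a false target.

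\textbf{Missing idea.} Bound the risk of $\mathcal B_d(Z)=\sum_x d_xp_x(Z)$ itself, without passing to $d_X$. Under independent randomization, fix a realized $D$. The estimate is affine in $A_i$ if $D_i=1$ and constant in $A_i$ if $D_i=0$, symmetrically for $C_i$, while the target is affine. So each squared error, and hence the risk, is convex in every coordinate separately. Its supremum over $[0,1]^{2N}$ is therefore attained at a vertex. There $Z$ is binary, $\mathcal B_d(Z)=d_{\sum_iZ_i}$, and the risk equals $\sum_xp_{k,m}(x)(d_x-\theta_{k,m})^2$. In the example above, the actual estimator has risk about $0.1087$ at the point where $d^*_X$ has $0.1374$.

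Your fallback of returning to $(a_i,c_i)$ and invoking convexity works only if applied to this risk. Applied to $R(q)$, which depends on $(a_i,c_i)$ only through $q_i$, it inherits the concavity described above.
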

\noindent The appendix proof follows in the same spirit of \cite{hodges1982minimax}; as in their bounded finite-population sampling problem, the minimax estimator $\hat\beta^*_{NL}$ is shown to be a Bayes rule under a least-favorable prior supported on the potential outcome endpoints.\footnote{The minimax characterization in Theorem \ref{theorem1} was independently obtained by \citet[][Theorems 2.1-2.3]{sudijono2026sharp}: their $(p,q,r)$ corresponds to my $(k,N-k-m,m)$, and their optimal endpoint rule satisfies $f_N^*(x)=d_x^*-1$.} This estimator takes the form of the previous restricted-affine $\hat\beta^*$ plus a term that is nonlinear in $Y$. This added term generally makes $\hat\beta^*_{NL}$ biased while reducing worst-case MSE.

To understand this new estimator, it is useful to specialize it to the previous affine class while relaxing the equivariance restriction \eqref{eq:equivariance}. The next result shows that a ``shrunk'' version of $\hat\beta^*$ is then optimal (along with independent randomization):
\begin{corollary}\label{corollary1}
Over all affine estimators, without imposing  \eqref{eq:equivariance}, minimax MSE is $(U-L)^2/(\sqrt{N}+1)^2$. A minimax procedure is independent random assignment with the estimator
\begin{align*}
\hat\beta^*_{AF} = \frac{\sqrt{N}}{\sqrt{N}+1}\hat\beta^*
\end{align*}
Moreover, $\kappa_N<(\sqrt{N}+1)^{-2}$ when $N>2$ such that $\hat\beta^*_{NL}$ has strictly lower worst-case MSE.
\end{corollary}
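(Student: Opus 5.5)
The proof has three parts: an upper bound from iid Bernoulli$(1/2)$ assignment with $c\hat\beta^*$; a matching lower bound from a three-point least-favorable mixture, using the Cauchy–Schwarz step from the proof of Proposition \ref{prop1}; and a perturbation argument for the strict comparison with $\kappa_N$. Throughout, write $h=(U-L)/2$, $a_i=Y_i(1)-M$, $b_i=Y_i(0)-M$ and $\tau_i=a_i-b_i$, so that $|a_i|,|b_i|\le h$ and $\beta=\frac1N\sum_i\tau_i$.

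\emph{Upper bound.} Under independent Bernoulli$(1/2)$ assignment, $\hat\beta^*=\frac2N\sum_i(2D_i-1)(Y_i-M)$ is a sum of independent terms. Each term equals $\frac2N a_i$ or $-\frac2N b_i$ with probability one half. Hence $\hat\beta^*$ is unbiased and
\begin{align*}
\mathrm{Var}(\hat\beta^*)=\frac{1}{N^2}\sum_i(a_i+b_i)^2 .
\end{align*}
For the shrunk estimator $c\hat\beta^*$ this gives
\begin{align*}
\E\big[(c\hat\beta^*-\beta)^2\big]=\frac{c^2}{N^2}\sum_i(a_i+b_i)^2+(1-c)^2\beta^2 .
\end{align*}
Given $\tau_i$, the box constraint implies $|a_i+b_i|\le 2h-|\tau_i|$, and $|\beta|\le\frac1N\sum_i|\tau_i|$. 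Set $s_i=|\tau_i|/(2h)\in[0,1]$. The resulting bound $\frac{c^2 4h^2}{N^2}\sum_i(1-s_i)^2+(1-c)^2 4h^2\big(\frac1N\sum_i s_i\big)^2$ is convex in $s$, so its maximum over $[0,1]^N$ is attained at a vertex. If a fraction $f$ of the $s_i$ equal one, the bound becomes
\begin{align*}
4h^2\left[\frac{c^2(1-f)}{N}+(1-c)^2f^2\right].
\end{align*}
With $c=\sqrt N/(\sqrt N+1)$ this equals $\frac{4h^2}{(\sqrt N+1)^2}\,(1-f+f^2)$. Since $f^2\le f$, it is at most $(U-L)^2/(\sqrt N+1)^2$.

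\emph{Lower bound.} Fix any design $\delta$ and any affine estimator $a(D)+b(D)'Y$. Define $e(D)=a(D)+M\,b(D)'\1$ and $g(D)=b(D)'(D-\tfrac12\1)$, the quantity that equivariance would set to one. Consider three kinds of configurations.
\begin{itemize}
\item[(A)] $Y_i(d)=M+2h(d-\tfrac12)$ for all $i$. Then $\beta=2h$ and the error is $e+2h(g-1)$.
\item[(A$'$)] The sign-flipped version $Y_i(d)=M-2h(d-\tfrac12)$. Then $\beta=-2h$ and the error is $e-2h(g-1)$.
\item[(B)] $Y_i(0)=Y_i(1)=M+h\varepsilon_i$ with $\varepsilon_i$ iid Rademacher, independent of $D$. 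Then $\beta=0$ and the MSE averaged over $\varepsilon$ is $\E[e^2]+h^2\E\|b\|^2$.
\end{itemize}
By Cauchy–Schwarz, as in Proposition \ref{prop1}, $g\le\|b\|\sqrt N/2$, so $\|b\|^2\ge 4g^2/N$.

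Put weight $w/2$ on each of (A) and (A$'$) and weight $1-w$ on the prior (B). Averaging (A) and (A$'$) cancels the cross term in $e$. The Bayes risk is therefore at least
\begin{align*}
\E\left[e^2+4h^2\Big(w(g-1)^2+(1-w)\frac{g^2}{N}\Big)\right].
\end{align*}
Minimizing pointwise in $D$ over $e$ and $g$ gives $4h^2\,\frac{w(1-w)}{Nw+1-w}$. The worst-case risk of any procedure is at least this Bayes risk for every $w$. Maximizing over $w$ (the maximizer is $w=1/(\sqrt N+1)$) gives exactly $(U-L)^2/(\sqrt N+1)^2$, which matches the upper bound.

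\emph{Strict comparison with $\kappa_N$.} This is the step I expect to be the main obstacle. On endpoint configurations, $\hat\beta^*_{AF}$ depends on the data only through the count $x$ from Theorem \ref{theorem1}. In the $d$-parameterization (where $d_x/1$ plays the role of $\beta/(U-L)+1$) it is the linear rule
\begin{align*}
d^{AF}_x=1+c\left(\frac{2x}{N}-1\right).
\end{align*}
Its risk at $(k,m)$ is $\frac{c^2m}{N^2}+(1-c)^2(\theta_{k,m}-1)^2$. By the computation above, the maximum $(\sqrt N+1)^{-2}$ is attained only at $(k,m)=(0,N)$ and at the two extremes $(N,0)$ and $(0,0)$, where $f=1$.

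I would show that $d^{AF}$ is not a minimizer of \eqref{eq:problem} when $N>2$ by exhibiting a direction that strictly lowers all three active risks at once. First try moving $d_0$ and $d_N$ slightly toward $\theta_{0,0}=0$ and $\theta_{N,0}=2$ respectively. This lowers the risks at the extremes linearly in the step size. It changes the risk at $(0,N)$ only through the weights $2^{-N}$ on $x=0,N$, and the first-order change there, $2^{-N}\cdot 2c\,(d_x-1)\cdot(\text{step})$, must be checked for sign. If that change increases the $(0,N)$ risk, I would offset it by simultaneously shrinking the interior $d_x$ toward $1$, which lowers the $(0,N)$ risk at first order. For $N>2$ the interior points carry enough mass that this offset keeps the extreme risks below the bound. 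I would then verify that the combined first-order change is negative for all three active $(k,m)$. Continuity ensures the non-active risks stay below the bound for a small enough step, which gives $\kappa_N<(\sqrt N+1)^{-2}$.
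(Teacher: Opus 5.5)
Your proposal is correct. Its first two parts take a different route from the paper; the strict-inequality step follows the paper.

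\textbf{Lower bound.} The paper reuses the symmetrization of Theorem~\ref{theorem1}. It checks that swap-averaging, the conditional-mean step, and permutation averaging all preserve affinity in $Z$. That reduces the problem to independent assignment and the two-parameter family $1+c+\lambda(\frac2N\sum_iZ_i-1)$. The paper then reads off $\max\{\lambda^2/N,(1-\lambda)^2\}$ from the vertex states $(0,N)$, $(0,0)$, $(N,0)$. You keep the design and $(a(D),b(D))$ arbitrary and put an explicit prior on exactly those three states: your (B), (A$'$), (A). You finish with the Cauchy--Schwarz step of Proposition~\ref{prop1} and pointwise minimization in $(e,g)$.

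\textbf{What each route buys.} Your argument needs nothing from Theorem~\ref{theorem1} and exhibits the least-favorable prior. At $w=1/(\sqrt N+1)$ the Bayes-optimal $(e,g)$ is $(0,\sqrt N/(\sqrt N+1))$, which is exactly your $c\hat\beta^*$, so you obtain a saddle point. The paper's route is shorter given Theorem~\ref{theorem1}. It also already places the affine rule in the $d$-parametrization that the $\kappa_N$ comparison uses.

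\textbf{Upper bound.} Yours is a direct bias--variance bound over the whole cube, using convexity in $s_i$. The paper instead computes the vertex risks $r_N(m/N+q_{k,m}^2)$ and extends them by the separate-convexity argument of Step~3.

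\textbf{Strict inequality.} You use the paper's perturbation ($h_0=-1$, $h_N=1$, $h_x=-Cq_x$ in the interior), but you misdescribe why $N>2$ matters.
\begin{itemize}
\item The interior changes do not affect the risks at $(0,0)$ and $(N,0)$ at all, since those depend only on $d_0$ and $d_N$. Their only role is at $(0,N)$, whose derivative along $h$ is $2c\{2^{1-N}-C(N^{-1}-2^{1-N})\}$.
\item The endpoint contribution $2^{1-N}$ is unambiguously positive, because moving $d_0,d_N$ outward raises this risk. So no sign check is needed.
\item The offset works for large $C$ because $N^{-1}-2^{1-N}=\sum_{0<x<N}\binom Nx 2^{-N}q_x^2>0$ when $N\ge 3$. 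At $N=2$ the only interior point has $q_1=0$.
\item You should also note that $d^{AF}+\varepsilon h$ stays in $[0,2]^{N+1}$, since $d^{AF}$ is interior.
\item The derivative factor is $2(d_x-1)=2cq_x$, not $2c(d_x-1)$.
\end{itemize}
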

\noindent The first part of this corollary shows that shrinking $\hat\beta^*$ towards zero---generating some bias and violating the equivariance restriction---reduces worst-case MSE by a factor of $N/(\sqrt{N}+1)^2$. The second part shows that further expanding to nonlinear estimators is useful for any nontrivial population size $N>2$. 

Computing $\hat\beta^*_{NL}$ is relatively straightforward, as $\eqref{eq:problem}$ is a finite-dimensional convex program with quadratic constraints that depends only on the population size $N$. A direct solver is practical for populations in low thousands; for larger $N$, constraint generation gives certified lower and upper bounds. Numerically, I find a relatively small share of constraints bind.\footnote{E.g., the number of binding or nearly binding constraints is in the low tens when $N$ is in the hundreds.}

Figure \ref{fig:minimax_risk} shows the potential benefit in using $\hat\beta^*_{NL}$ instead of $\hat\beta^*$. Specifically, it plots $\kappa_N N$, the worst-case MSE from using $\hat\beta_{NL}$ as a fraction of the worst-case MSE from using $\hat\beta^*$ (both with independent randomization)  on a grid up to $N=2,500$.  The relative worst-case MSE reduction is substantial for moderate sample sizes and declines smoothly with $N$; for example, it is around 30\% when $N=100$ and falls to around 15\% for $N=1,000$.  Proposition \ref{propA1} in Appendix \ref{apdx_kappaN} shows it converges to zero as $N\rightarrow\infty$, as with the gain in using $\hat\beta^*$ and independent randomization relative to difference-in-means estimation. with complete randomization.\footnote{\cite{sudijono2026sharp} sharpen this result by showing $\kappa_N = N^{-1} - C_A N^{-4/3}+o(N^{-4/3})$, or $N\kappa_N = 1 - C_A N^{-1/3}+o(N^{-1/3})$, for constant $C_A$. One may wonder whether complete randomization remains strictly suboptimal with general estimators. I have verified this numerically in populations of moderate size and conjecture it is true in general.} Figure \ref{fig:minimax_risk} also plots the corresponding relative worst-case MSE of the best affine estimator $\hat\beta^*_{AF}$ under independent randomization, which has a similar shape.

\begin{figure}[t]
    \centering
    \caption{Relative Worst-Case MSE of Optimal Estimators}
   \vspace{0.2cm}
    \includegraphics[width=0.75\textwidth]{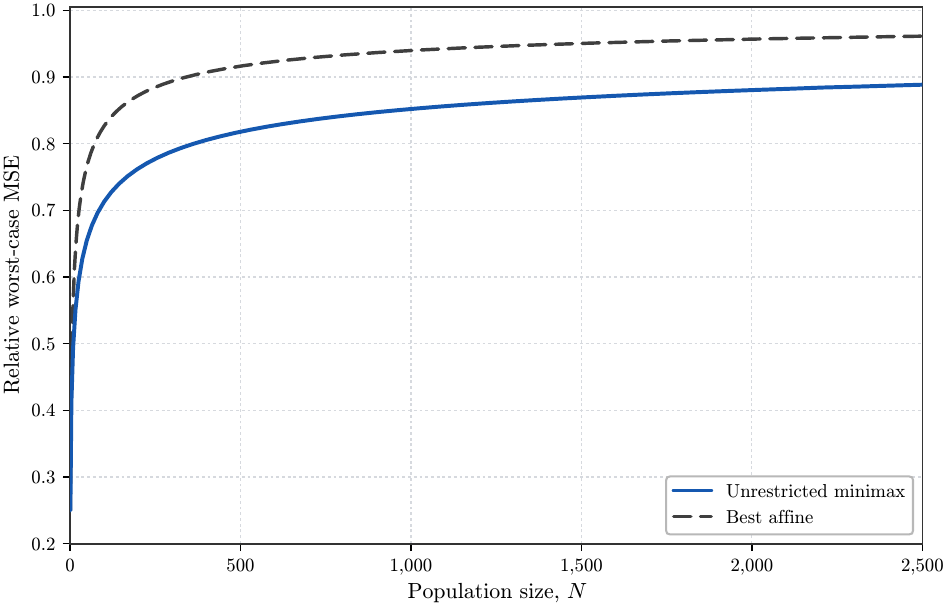}
    \label{fig:minimax_risk}
    
    \begin{minipage}{0.9\textwidth}
    \footnotesize
    \textit{Notes:} The solid blue line plots $N\kappa_N$ against the actual population size $N$, where $\kappa_N$ is the minimax MSE from Theorem \ref{theorem1} divided by $(U-L)^2$. Thus, it reports the worst-case MSE of the unrestricted minimax estimator as a fraction of the worst-case MSE of the equivariant affine
estimator $\widehat\beta^*$ from Proposition \ref{prop1} (both under independent
randomization). Values solve program \eqref{eq:problem} on the grid $N=j^2$, $j=1,\ldots,50$. The dashed black line plots the corresponding relative worst-case MSE of the best affine estimator under independent randomization, $N/(\sqrt{N}+1)^2$.
    \end{minipage}
\end{figure}
 
Figure \ref{fig:binarysimplex} shows how the unconventional estimators $\hat\beta^*_{NL}$ and $\hat\beta^*$ generate worst-case MSE reductions by reporting their actual MSE across different configurations of binary potential outcomes. Specifically, I consider a population of $N=1,000$ units with type shares $\pi_{ab}=\frac{|Y_i(0)=a,Y_i(1)=b|}{1,000}$ parameterized by $e=\pi_{00}+\pi_{11}$, $\ell=\pi_{11}-\pi_{00}$, and the ATE $\beta=\pi_{01}-\pi_{10}$. Here $e$ is the share of units with equal potential outcomes at the same support endpoint---types \(00\) and \(11\)---while \(\ell\) is the excess share of type \(11\) over type \(00\) capturing the ``imbalance'' of these endpoint units. The figure shows MSE for all configurations of potential outcomes that satisfy the feasibility constraint of $|\ell|\le e\le 1-|\beta|$, in a grid where each $\pi_{ab}$ is a multiple of $1/8$. Panel (a) shows this for $\beta=0$ while panel (b) uses $\beta=0.25$. Exact MSE is reported for the nonlinear minimax estimator $\hat\beta^*_{NL}$, the outcome-centered Horvitz-Thompson estimator $\hat\beta^*$ (both under independent randomization), and the difference-in-means estimator under complete randomization.\footnote{Here $\mathrm{MSE}=e/N$ for $\hat\beta^*$ while for difference-in-means  \(\mathrm{MSE}=(e-\ell^{2})/(N-1)\). For $\hat\beta^*_{NL}$, let \(n_{ab}=N\pi_{ab}\) and set \(k=n_{01}\), \(m=n_{00}+n_{11}\), and \(\theta=(2k+m)/N=1+\beta\). Then $\hat\beta^*_{NL}=d^{*}_{k+B}-1$ for  \(B\sim\mathrm{Binomial}(m,1/2)\), where \(d^{*}\) solves \eqref{eq:problem}, giving $\mathrm{MSE}=\sum_{b=0}^{m}\binom{m}{b}2^{-m}\left(d^{*}_{k+b}-\theta\right)^2$ . Plots look similar for other choices of $N$.\label{fn_MSE}}

\begin{figure}[h!]
    \centering
    \caption{Pointwise MSE under Binary Potential Outcomes}
   \vspace{0.1cm}
    \includegraphics[width=0.86\textwidth]{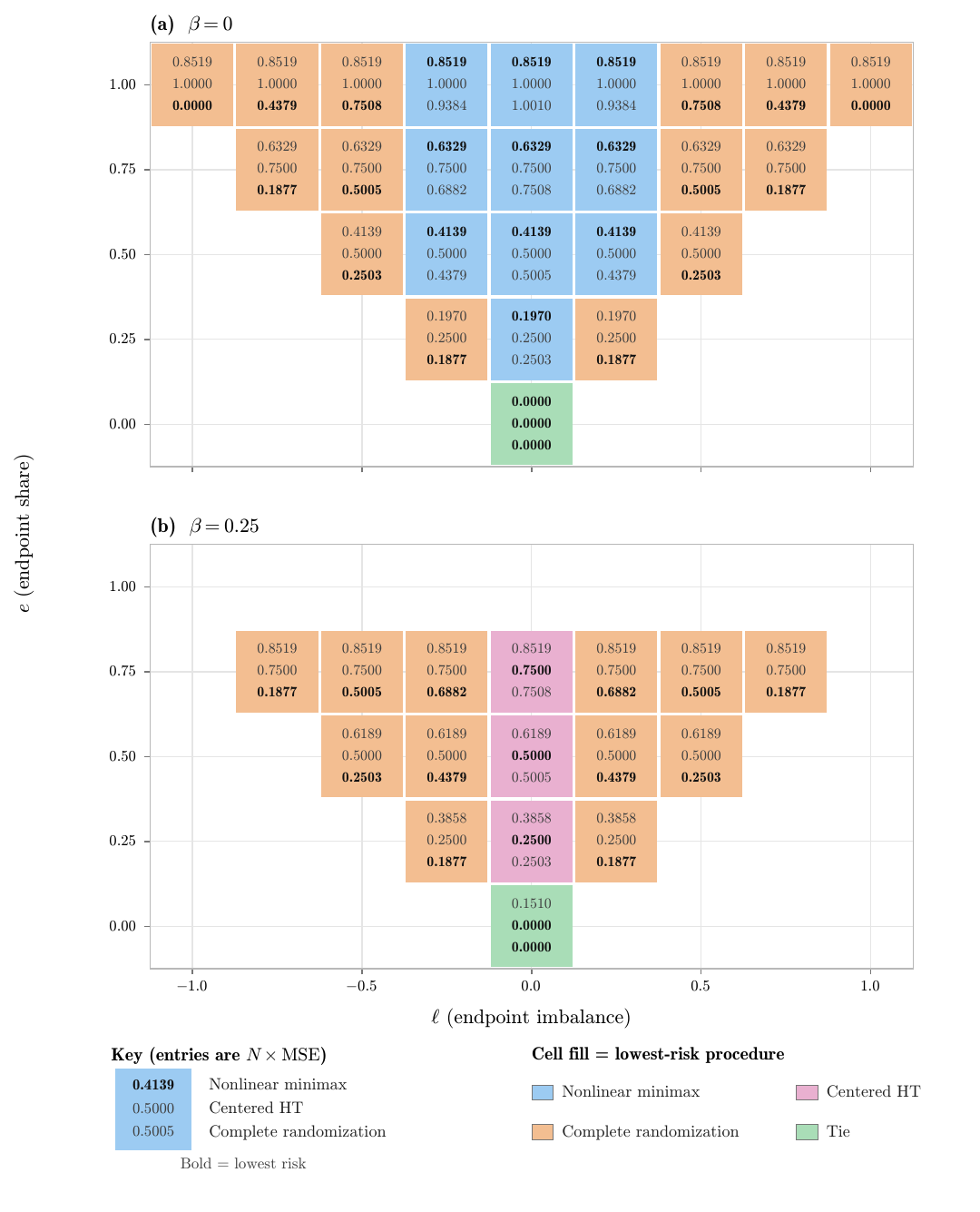}
    \label{fig:binarysimplex}
    
    \begin{minipage}{1\textwidth}
    \footnotesize
    \textit{Notes:} This figure reports the pointwise MSE of different designs and estimators across binary potential outcome schedules with \(N=1{,}000\). As detailed in the text, $e$ is the share of units with both potential outcomes at a support endpoint and $\ell$ captures the imbalance of these endpoint units. Panels (a) and (b) consider schedules with an ATE of zero and 0.25, respectively. Within each cell, the entries from top to bottom report \(N\) times MSE for the unrestricted nonlinear minimax estimator \(\hat{\beta}^{*}_{NL}\) from Theorem 1 under independent randomization, the support-midpoint-centered Horvitz–Thompson estimator \(\hat{\beta}^{*}\) from Proposition 1 under independent randomization, and difference-in-means under balanced complete randomization. The nonlinear rule is obtained by solving program \eqref{eq:problem}. All displayed risks are then evaluated exactly over the assignment distribution, conditional on the indicated finite population. Bold denotes the lowest MSE within a cell, and the cell fill identifies the corresponding procedure; green denotes a tie.
    \end{minipage}
\end{figure}

Independent randomization can beat complete randomization when there are some units with no treatment effects and both potential outcomes at the bounds (here, $Y_i(0)=Y_i(1)=1$ or $Y_i(0)=Y_i(1)=0$), especially when these endpoint units are roughly balanced (i.e., roughly evenly split between zero and one, with $|\ell|\approx 0$). Intuitively, the midpoint centering in $\hat\beta^*$ is then more effective at reducing variance, which $\hat\beta^*_{NL}$ further improves on. It does so by introducing some bias, essentially shrinking the estimate towards zero. When $\beta=0$ this in fact adds no bias, so $\hat\beta^*_{NL}$ weakly beats $\hat\beta^*$ everywhere (and strictly so for $e>0$) while also having more scope to improve MSE overall; in contrast, when $\beta=0.25$, the unbiased $\hat\beta^*$ beats the biased $\hat\beta^*_{NL}$ while being optimal over a smaller share of configurations. 

Researchers can use calculations like those in Figure \ref{fig:binarysimplex}  to compute the expected MSE (or, more generally, bias and variance) of different designs and ATE estimators for binary outcomes.  Writing $\mu_1$ as the population-average treated potential outcome and $\mu_0$ as the population-average untreated potential outcome, we have $\beta=\mu_1-\mu_0$, $\ell=\mu_1+\mu_0-1$, and $e\in[|\ell|,1-|\beta|]$.  Thus, for any predicted outcome means in the treatment and control arm (and anticipated population size), the researcher can solve \eqref{eq:problem} for the implied values of $\beta$, $\ell$, and range of $e$ and compute the expected MSE of different designs and estimators (see again footnote \ref{fn_MSE}). For example, the areas of Figure \ref{fig:binarysimplex}(a) where $\hat\beta^*_{NL}$ dominate are those where $\mu_0=\mu_1\in\{3/8,1/2,5/8\}$, while the areas of Figure \ref{fig:binarysimplex}(b) where $\hat\beta^*$ dominate are those where $\mu_0=3/8$ and $\mu_1=5/8$. Similar computations follow for other (nonbinary) outcome bounds.

\section{Conclusion}\label{sec:conclusion}

This paper shows a curious advantage of independent randomization, relative to complete or otherwise stratified randomization, when estimating ATEs for bounded potential outcomes. Exact balance of treated shares can be suboptimal once the level of outcomes is anchored, because the estimator variance can no longer be made arbitrarily large by a random treated share multiplying some arbitrarily high or low outcomes. Independent randomization is generally preferred by avoiding negative correlations in treatment assignments which can otherwise blow up MSE. While such correlations tend to be small for complete randomization, they can be first-order if balance is imposed repeatedly, such as in paired randomization.

In practice, complete, paired, or more elaborate randomization schemes are likely to dominate simple independent randomization when based on informative observables. Similarly, more elaborate regressions that incorporate informative controls or fixed effects are likely to dominate the unconventional support-midpoint-centered estimator in the restricted affine class and likely also the minimax-optimal estimator $\hat\beta^*_{NL}$.  Extending this paper's minimax problem to additional assumptions on the homogeneity of potential outcomes within strata, or other prognostic structure, is a natural direction for future work.

\singlespacing
\bibliographystyle{aer}
\bibliography{causalHL}

\appendix 
\onehalfspacing
\numberwithin{proposition}{section}
\section{Proofs}
\subsection{Proposition \ref{prop1}}
\begin{proof}
Write $M=\frac{U+L}{2}$, $r=\frac{U-L}{2}$, and $X(D)=D-\frac{1}{2}\mathbf{1}$.  For each fixed $s\in\{-1,1\}^N$, consider the no-effect potential outcome schedule $Y_i(0)=Y_i(1)=M+rs_i$.  For any design $\delta$, the maximum risk is at least the average risk over these $2^N$ schedules.  Hence, for any affine estimator $\widehat\beta=a(D)+b(D)'Y$ satisfying  \eqref{eq:equivariance},
\begin{align*}
\sup_{Y(\cdot)\in[L,U]^{2N}}\E_\delta[(\widehat\beta-\beta)^2]
&\geq \E_\delta\left[2^{-N}\sum_s\{a(D)+Mb(D)'\1+r b(D)'s\}^2\right]\\
&=\E_\delta\left[\{a(D)+Mb(D)'\1\}^2+r^2\|b(D)\|^2\right].
\end{align*}
The second line uses $2^{-N}\sum_s s=0$ and $2^{-N}\sum_s ss'=I$.  By  \eqref{eq:equivariance}  and Cauchy--Schwarz,
\[
1=\{b(D)'X(D)\}^2\leq \|b(D)\|^2\|X(D)\|^2
=\frac N4\|b(D)\|^2,
\]
so $\|b(D)\|^2\geq 4/N$ a.s. Thus every admissible procedure has worst-case MSE of at least
\[
\frac{4r^2}{N}=\frac{(U-L)^2}{N}.
\]

It remains to show attainability.  Under independent Bernoulli$(0.5)$ assignment write $H_i=2D_i-1$, $\mu_i=\frac{Y_i(1)+Y_i(0)}2$, and $\tau_i=Y_i(1)-Y_i(0)$. Then $Y_i-M=(\mu_i-M)+H_i\tau_i/2$, so the proposed estimator satisfies
\[
\widehat\beta^*=\frac2N\sum_iH_i(Y_i-M)
=\beta+\frac2N\sum_iH_i(\mu_i-M).
\]
Then, since the $H_i$ are independent mean-zero Rademacher variables,
\[
\E_\delta[(\widehat\beta^*-\beta)^2]
=\frac4{N^2}\sum_i(\mu_i-M)^2
\leq \frac{4r^2}{N}
=\frac{(U-L)^2}{N}.
\]
The bound is attained, for example, when $Y_i(0)=Y_i(1)=U$ for every $i$.  
\end{proof}
\subsection{Proposition \ref{prop2}}

\begin{proof}
Write $M=\frac{U+L}{2}$, $r=\frac{U-L}{2}$,  $X(D)=D-\frac{1}{2}\mathbf{1}$, $P=I-\frac1N\1\1'$, and $\mathcal S=\{s\in\{-1,1\}^N:\ \1's=0\}$. Under balanced complete randomization, $X(D)\in\1^\perp$ and $\|X(D)\|^2=N/4$ a.s.  For each $s\in\mathcal S$, consider the no-effect potential outcome schedule $Y_i(0)=Y_i(1)=M+rs_i$.  Uniform averaging over $\mathcal S$ gives
\[
\frac1{|\mathcal S|}\sum_{s\in\mathcal S}s=0 \hspace{0.3cm}\text{ and }\hspace{0.3cm}
\frac1{|\mathcal S|}\sum_{s\in\mathcal S}ss'=\frac{N}{N-1}P.
\]
Hence, for any affine estimator satisfying  \eqref{eq:equivariance} ,
\begin{align*}
\sup_{Y(\cdot)\in[L,U]^{2N}}\E_\delta[(\widehat\beta-\beta)^2]
&\geq \E_\delta\left[\{a(D)+Mb(D)'\1\}^2
+r^2\frac{N}{N-1}b(D)'Pb(D)\right].
\end{align*}
Since $X(D)\in\1^\perp$, $1=b(D)'X(D)=\{Pb(D)\}'X(D)$; hence $b(D)'Pb(D)\geq 4/N$ by Cauchy--Schwarz.  This yields the lower bound
\[
\sup_{Y(\cdot)\in[L,U]^{2N}}\E_\delta[(\widehat\beta-\beta)^2]
\geq \frac{(U-L)^2}{N-1}.
\]

For the upper bound, let $H=2D-\1$ and write $\mu_i=\{Y_i(1)+Y_i(0)\}/2$.  Under balanced complete randomization, the difference-in-means estimator can be written
\[
\widehat\beta_{\rm DM}=\frac2N H'Y
=\beta+\frac2N H'\mu
=\beta+\frac2N H'(\mu-M\1),
\]
because $H'\1=0$.  Moreover, $\E_\delta[HH']=\frac{N}{N-1}P$.  Thus
\begin{align*}
\E_\delta[(\widehat\beta_{\rm DM}-\beta)^2]
&=\frac4{N^2}(\mu-M\1)'\frac{N}{N-1}P(\mu-M\1)\\
&\leq \frac4{N(N-1)}\|\mu-M\1\|^2 \\
&\leq \frac{(U-L)^2}{N-1}.
\end{align*}
Equality is attained by a no-effect schedule with $N/2$ units having $Y_i(0)=Y_i(1)=U$ and $N/2$ units having $Y_i(0)=Y_i(1)=L$. Hence difference in means attains the lower bound.
\end{proof}

\subsection{Proposition \ref{pair_prop}}

\begin{proof}
Again let $M=\frac{U+L}{2}$ and $r=\frac{U-L}{2}$.  For $s=(s_1,\ldots,s_J)\in\{-1,1\}^J$, consider the no-effect potential outcome schedule:
\[
Y_{j1}(0)=Y_{j1}(1)=M+rs_j,
\hspace{0.3cm}\text{ and }\hspace{0.3cm}
Y_{j2}(0)=Y_{j2}(1)=M-rs_j.
\]
Averaging risk over these $2^J$ schedules gives, for any affine estimator satisfying  \eqref{eq:equivariance},
\begin{align*}
\sup_{Y(\cdot)\in[L,U]^{2J}}\E_\delta[(\widehat\beta-\beta)^2]
&\geq \E_\delta\left[\{a(D)+Mb(D)'\1\}^2
+r^2\sum_{j=1}^J\{b_{j1}(D)-b_{j2}(D)\}^2\right].
\end{align*}
Let $H_j=2D_{j1}-1$, so $2D_{j2}-1=-H_j$.  Condition \eqref{eq:equivariance} implies, pointwise in $D$,
\[
1=b(D)'\left(D-\frac12\1\right)
=\frac12\sum_{j=1}^JH_j\{b_{j1}(D)-b_{j2}(D)\}.
\]
By Cauchy--Schwarz,
\[
\sum_{j=1}^J\{b_{j1}(D)-b_{j2}(D)\}^2\geq \frac4J.
\]
Therefore every admissible estimator has worst-case risk at least
\[
\frac{4r^2}{J}=\frac{(U-L)^2}{J}.
\]

Under paired randomization, the pair-fixed-effect coefficient can be written
\[
\widehat\beta_{\rm PFE}
=\frac1J\sum_{j=1}^JH_j(Y_{j1}-Y_{j2}).
\]
Writing $\mu_{jr}=\{Y_{jr}(1)+Y_{jr}(0)\}/2$, direct substitution gives
\[
\widehat\beta_{\rm PFE}-\beta
=\frac1J\sum_{j=1}^JH_j(\mu_{j1}-\mu_{j2}).
\]
The $H_j$ are independent mean-zero Rademacher variables, so
\[
\E_\delta[(\widehat\beta_{\rm PFE}-\beta)^2]
=\frac1{J^2}\sum_{j=1}^J(\mu_{j1}-\mu_{j2})^2
\leq \frac{(U-L)^2}{J}.
\]
Equality is attained by setting, in each pair, one unit's two potential outcomes equal to $U$ and the other's equal to $L$.  Thus the pair-fixed-effect estimator attains the lower bound.
\end{proof}
\subsection{Theorem \ref{theorem1}}
\begin{proof}
Write $M=\frac{U+L}{2}$. $\Delta=U-L$, $A_i=\frac{Y_i(1)-L}{\Delta}$, and $C_i=\frac{U-Y_i(0)}{\Delta}$. Then $A_i,C_i\in[0,1]$ and $\beta=\Delta(\theta-1)$ for
\[
\theta=\frac1N\sum_{i=1}^N(A_i+C_i).
\]
Moreover, 
\[
Z_i
=\frac12+\frac{(2D_i-1)(Y_i-M)}{\Delta}=D_iA_i+(1-D_i)C_i.
\]
It thus suffices to solve the normalized problem of estimating $\theta$ from one observed member of each pair $(A_i,C_i)$; squared-error risk for $\beta$ is $\Delta^2$ times the corresponding risk for $\theta$. 

We proceed in three steps:

\smallskip
\noindent\emph{Step 1: Symmetrization.}
Since $Y_i=M+\Delta(2D_i-1)(Z_i-\frac12)$, any measurable estimator based on $(D,Y)$ can be equivalently written as some $t(D,Z)$. Consider any design $\delta$ and any such estimator.  For $h\in\{0,1\}^N$, let $(A^h,C^h)$ be obtained by interchanging
$A_i$ and $C_i$ whenever $h_i=1$, and define $\delta_h(d)=\delta(d\oplus h)$ and $t_h(d,z)=t(d\oplus h,z)$. Since simultaneously replacing $(A,C)$ by $(A^h,C^h)$ and $D$ by
$D\oplus h$ leaves both $Z$ and $\theta$ unchanged,
\[
R(\delta_h,t_h;A,C)
=
R(\delta,t;A^h,C^h).
\]
The parameter space is invariant under such swaps, so every
$(\delta_h,t_h)$ has the same maximum risk as $(\delta,t)$. Mixing uniformly over all $h$ therefore cannot increase maximum risk. The resulting design satisfies
\[
Pr(D=d)=2^{-N}\sum_h\delta(d\oplus h)=2^{-N}.
\]
so the assignment becomes independent Bernoulli$(0.5)$ randomization.  Conditional on $(D,Z)=(d,z)$, the mean of the randomized estimator is
\[
\sum_h\delta(d\oplus h)t(d\oplus h,z)
=
\sum_r\delta(r)t(r,z)
\equiv \widetilde t(z),
\]
which does not depend on $d$. By Jensen's inequality, replacing the
randomized estimate by this conditional mean weakly lowers squared-error
risk.  Thus it is without loss for minimaxity to restrict attention to independent randomization and estimators of the form $\widetilde t(Z)$.  Since independent randomization is invariant to permutations and the
parameter space is invariant to a common relabeling of the potential outcome pairs, the same averaging-and-Jensen argument over permutations allows $\widetilde t$ to be taken symmetric.

\smallskip
\noindent\emph{Step 2: Lower bound.}
Restrict the transformed parameter space to vertices $(A_i,C_i)\in\{0,1\}^2$ for every $i$.  Up to pair swaps and permutations, a vertex is characterized by $k$ pairs of type $(1,1)$, $m$ mixed pairs, and $N-k-m$ pairs of type $(0,0)$.  Its target is $\theta_{k,m}=(2k+m)/N$.  Under independent randomization, if $X=\sum_iZ_i$, then $X\overset d= k+\operatorname{Binomial}(m,1/2)$ so $Pr(X=x)=p_{k,m}(x)$.  On binary observations a symmetric estimator is completely described by numbers $d_x=\widetilde{t}(z)$ whenever $\sum_i z_i=x$.  Since $\theta\in[0,2]$, clipping the $d_x$ to $[0,2]$ cannot increase risk.  Since the symmetrized procedure has maximum risk no greater than the original procedure, while its maximum risk is at least maximum risk
over the endpoint parameter space, it follows that every original
procedure has maximum risk of at least
\[
\min_{d\in[0,2]^{N+1}}\max_{(k,m)\in K_N}
\sum_{x=0}^Np_{k,m}(x)(d_x-\theta_{k,m})^2
=\kappa_N.
\]

\smallskip
\noindent\emph{Step 3: Attainment.}
For $d\in[0,2]^{N+1}$ define 
\[
\mathcal B_d(z)=\sum_{x=0}^Nd_xp_x(z),
\]
where $p_x(z)$ is the probability that a sum of independent Bernoulli$(z_i)$ variables equals $x$.  The function $\mathcal B_d$ is affine in each coordinate separately and equals $d_{\sum_i z_i}$ at every $z\in\{0,1\}^N$.

Under independent randomization, the normalized risk at $(A,C)\in[0,1]^{2N}$ is
\[
R_d(A,C)=2^{-N}\sum_{g\in\{0,1\}^N}
\left[\mathcal B_d\{Z_g(A,C)\}-\frac1N\sum_i(A_i+C_i)\right]^2.
\]
Fix all coordinates except $A_i$.  In terms with $g_i=1$, both the estimate and the target are affine in $A_i$; in terms with $g_i=0$, the estimate is constant and the target is affine.  Each squared difference is therefore convex in $A_i$.  The same argument applies separately to every $A_i$ and $C_i$.  Thus $R_d$ is separately convex on the cube, and successively maximizing over coordinates shows that its maximum is attained at a vertex.  Taking $d=d^*$ solving \eqref{eq:problem}, every vertex risk is at most $\kappa_N$ by construction, so
\[
\sup_{(A,C)\in[0,1]^{2N}}R_{d^*}(A,C)\leq\kappa_N.
\]
Along with Step 2, this shows the normalized minimax value of $\kappa_N$ and hence the stated value $\Delta^2\kappa_N$. Finally, note that
\[
\sum_{x=0}^N xp_x(Z)=\sum_{i=1}^NZ_i,
\hspace{0.3cm}\text{ and }\hspace{0.3cm}
\widehat\beta^*=\Delta\left(\frac2N\sum_iZ_i-1\right).
\]
Therefore we can write:
\[
\hat\beta^*_{NL}=\Delta\{\mathcal B_{d^*}(Z)-1\}
=\widehat\beta^*+\Delta\sum_{x=0}^N\left(d_x^*-\frac{2x}{N}\right)p_x(Z).
\]

\smallskip
\noindent\emph{Least-favorable-prior characterization.}
For completeness in connecting to \cite{hodges1982minimax}, let $\pi$ be any prior on the finite state space $K_N$.  Because $[0,2]^{N+1}$ and the simplex of priors on $\mathcal K_N$
are compact and convex, while the prior-weighted risk is continuous,
convex in $d$, and linear in $\pi$, the minimax theorem gives
\[
\kappa_N
=
\max_\pi\min_d
\sum_{k,m}\pi_{k,m}
\sum_x p_{k,m}(x)(d_x-\theta_{k,m})^2.
\]
Let $d^*$ be a minimax rule and $\pi^*$ a least-favorable prior.
They form a saddle point. Hence $d^*$ minimizes Bayes risk under
$\pi^*$ and, whenever
\[
q_{\pi^*}(x)
=
\sum_{k,m}\pi^*_{k,m}p_{k,m}(x)>0,
\]
\[
d_x^*
=
\frac{\sum_{k,m}\pi^*_{k,m}p_{k,m}(x)\theta_{k,m}}
     {q_{\pi^*}(x)}
=
\E_{\pi^*}[\theta_{k,m}\mid X=x].
\]
Coordinates with $q_{\pi^*}(x)=0$ are unrestricted by Bayes risk and
may be chosen as in the minimax rule $d^*$. 
\end{proof}

\subsection{Corollary \ref{corollary1}}
\begin{proof}
Work with the normalized problem for estimating $\theta=\frac{1}{N}\sum_{i=1}^N(A_i+C_i)$
as in the proof of Theorem \ref{theorem1}.  Since, for fixed $D=d$,
\[
Y_i=M+(U-L)(2d_i-1)(Z_i-\tfrac12),
\]
any estimator that is affine in $Y$ is also affine in $Z$ conditional on
$D$.  The conditional-mean symmetrization in the proof of Theorem 1 is a
weighted average $\sum_d\delta(d)t(d,Z)$ with weights independent of $Z$,
so it remains affine in $Z$. Permutation averaging preserves affinity
as well. Hence it is without loss to consider a symmetric affine
function of $Z$. Such an
estimator must take the form
\[
\widehat\theta
=
1+c+\lambda\left(\frac{2}{N}\sum_{i=1}^N Z_i-1\right)
\]
for some $c,\lambda\in\mathbb R$.  Equivalently, its endpoint rule in
\eqref{eq:problem} is $d_x=1+c+\lambda q_x$ and $q_x=\frac{2x}{N}-1$. 

For any $(k,m)$, let
\[
q_{k,m}=\theta_{k,m}-1=\frac{2k+m-N}{N}.
\]
Since $X=k+\operatorname{Binomial}(m,0.5)$, we have $E[q_X]=q_{k,m}$ and $\operatorname{Var}(q_X)=\frac{m}{N^2}.$
The normalized risk of the affine rule is therefore
\[
R_{k,m}(c,\lambda)
=
\frac{\lambda^2m}{N^2}
+
\left\{c+(\lambda-1)q_{k,m}\right\}^2.
\tag{A.1}\label{A1}
\]

Consider the three particular $(k,m)$ of $(0,N)$, $(0,0)$, and $(N,0)$. Equation \eqref{A1} gives
\[
R_{0,N}(c,\lambda)
=
c^2+\frac{\lambda^2}{N},
\]
while $\max\{R_{0,0}(c,\lambda),R_{N,0}(c,\lambda)\}
=
\max\{(c+1-\lambda)^2,(c-1+\lambda)^2\}
\geq (1-\lambda)^2$. Thus every affine procedure has normalized worst-case risk at least
\[
\max\left\{\frac{\lambda^2}{N},(1-\lambda)^2\right\}.
\]
Minimizing this expression over $\lambda$ gives $\lambda_N=\frac{\sqrt N}{\sqrt N+1}$ and 
\[
\min_\lambda
\max\left\{\frac{\lambda^2}{N},(1-\lambda)^2\right\}
=
\frac{1}{(\sqrt N+1)^2}.
\]

This lower bound is attained with $c=0$ and $\lambda=\lambda_N$.
Indeed, for any $(k,m)$, $|q_{k,m}|\leq 1-\frac{m}{N}$, and, since $\frac{\lambda_N^2}{N}=(1-\lambda_N)^2$, equation \eqref{A1} yields
\[
R_{k,m}(0,\lambda_N)
=
\frac{1}{(\sqrt N+1)^2}
\left\{
\frac{m}{N}+q_{k,m}^2
\right\}
\leq
\frac{1}{(\sqrt N+1)^2}
\left\{
\frac{m}{N}+
\left(1-\frac{m}{N}\right)^2
\right\}
\leq
\frac{1}{(\sqrt N+1)^2}.
\]

Let
\[
d_x^A=1+\lambda_N\left(\frac{2x}{N}-1\right).
\]
The preceding calculation shows that every endpoint risk of $d^A$ is
at most $(\sqrt N+1)^{-2}$. Its Bernstein extension is
\[
B_{d^A}(Z)
=
1+\lambda_N\left(\frac2N\sum_i Z_i-1\right),
\]
since $\sum_x x p_x(Z)=\sum_iZ_i$. Thus this extension is precisely
the normalized version of $\lambda_N\widehat\beta^*$. By the
separate-convexity argument in Step 3 of the proof of Theorem 1, its
maximum risk over the full cube is attained at a vertex. Hence the affine minimax risk is
\[
\frac{(U-L)^2}{(\sqrt N+1)^2}.
\]

Finally, the Bernstein extension of $d_x^A$ is itself affine, since
\[
\sum_{x=0}^N xp_x(Z)=\sum_{i=1}^N Z_i.
\]
Thus the corresponding estimator of $\beta$ is
\[
(U-L)\left\{
\sum_xd_x^Ap_x(Z)-1
\right\}
=
\lambda_N(U-L)
\left(\frac{2}{N}\sum_iZ_i-1\right)
=
\lambda_N\widehat\beta^*,
\]

For the strict inequality, let $r_N=\frac{1}{(\sqrt N+1)^2}$.  At the affine optimum,
\[
R_{k,m}(d^A)
=
r_N\left\{\frac{m}{N}+q_{k,m}^2\right\}.
\tag{A.2}\label{A2}
\]
The inequality above is strict except when $(k,m)$ equals $(0,N)$, $(0,0)$, or $(N,0)$. Indeed, writing $t=m/N$, we have
\[
\frac{m}{N}+q_{k,m}^2
\leq t+(1-t)^2\leq1.
\]
Equality in the second inequality requires $t\in\{0,1\}$.  If $t=1$,
then necessarily $(k,m)=(0,N)$.  If $t=0$, equality in the first
inequality additionally requires $|q_{k,0}|=1$, which gives
$(k,m)=(0,0)$ or $(N,0)$.  Thus these are exactly the  named values of $(k,m)$.

For $N > 2$, choose
\[
C>
\frac{2^{1-N}}{N^{-1}-2^{1-N}}
\]
and define a perturbation $h=(h_0,\ldots,h_N)$ by $h_0=-1$, $h_N=1$, and $h_x=-Cq_x,$ $x=1,\ldots,N-1$. The directional derivatives of risks at the first two active states
in \eqref{A2} are strictly negative:
\begin{align*}
\left.\frac{d}{d\varepsilon} R_{0,0}(d^A+\varepsilon h)\right|_{\varepsilon=0} = \left.\frac{d}{d\varepsilon} R_{N,0}(d^A+\varepsilon h)\right|_0 = -2(1-\lambda_N)<0.
\end{align*}
 At the third, if
$X\sim\operatorname{Binomial}(N,\tfrac12)$,
\[
\left.\frac{d}{d\varepsilon}
R_{0,N}(d^A+\varepsilon h)\right|_{\varepsilon=0}
=
2\lambda_N\mathbb E[q_Xh_X].
\]
Since $\mathbb E[q_X^2]=1/N$,
\[
\mathbb E[q_Xh_X]
=
2^{1-N}
-
C\left(\frac1N-2^{1-N}\right)
<0
\]
by the choice of $C$.  Thus all three active risks decrease for
sufficiently small $\varepsilon>0$.  Every other $(k,m)$ has
strict slack in \eqref{A2}, and there are only finitely many such states.
Hence, for sufficiently small $\varepsilon>0$, the perturbed rule
$d^A+\varepsilon h$ remains in $[0,2]^{N+1}$ and has every endpoint
risk strictly below $r_N$.  Thus by \eqref{eq:problem}, $\kappa_N<\frac{1}{(\sqrt N+1)^2}$ for every $N>2$.
\end{proof}

\subsection{Asymptotics of $N\kappa_N$ }\label{apdx_kappaN}

\begin{proposition}
$N\kappa_N\rightarrow 1$  as $N\rightarrow\infty$.\label{propA1}
\end{proposition}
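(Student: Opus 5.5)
The statement has two halves. The upper half $N\kappa_N\le 1$ is immediate. The substance is the lower bound $\liminf_N N\kappa_N\ge 1$, which I would prove by a Bayes-risk argument over a one-parameter subfamily of endpoint states in program \eqref{eq:problem}.

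\emph{Upper bound.} By Theorem \ref{theorem1}, $(U-L)^2\kappa_N$ is the minimax MSE over all measurable estimators. This is at most the worst-case MSE of the particular procedure in Proposition \ref{prop1}, which equals $(U-L)^2/N$. Hence $\kappa_N\le 1/N$ for every $N$, and $\limsup_N N\kappa_N\le 1$.

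\emph{Lower bound: choice of subfamily.} The maximum in \eqref{eq:problem} dominates any prior-weighted average of the risks. I would restrict to states with $k+m=N$ and $k\in\{0,1,\dots,K_N\}$, where $K_N=a_N\sqrt N$, $a_N\to\infty$ and $a_N/\sqrt N\to0$. For such a state,
\[
\theta_{k,m}=\frac{N+k}{N},\qquad X\overset d=k+\operatorname{Binomial}(N-k,\tfrac12).
\]
So $\tfrac{2X}{N}$ has mean $\theta_{k,m}$ and variance $(N-k)/N^2=(1+o(1))/N$, uniformly over this range. Reparametrize by $u=k/\sqrt N\in[0,a_N]$ and rescale both the estimate and the target by $\sqrt N$. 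The normalized statistic $W=(2X-N)/\sqrt N$ then has mean $u$ and variance $1-u/\sqrt N\to1$, so the experiment is a nearly Gaussian location family. I would put the uniform prior $\pi_N$ on $k\in\{0,\dots,K_N\}$. It then suffices to show
\[
N\min_d\sum_k\pi_N(k)\sum_x p_{N-k,k}(x)\,(d_x-\theta)^2\;\ge\;1-o(1),
\]
since the minimum over $d$ of this Bayes risk lower-bounds $\kappa_N$ (the notation $p_{N-k,k}$ is loose; I mean the state $(k,N-k)$).

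\emph{Lower bound: comparison with the Gaussian experiment.} I would couple $W$ under parameter $u$ with $N(u,1)$, using local CLT / Berry--Esseen bounds for the binomial. The relevant distance is total variation after the discretization is smoothed, e.g.\ by adding independent uniform noise on the lattice spacing $2/\sqrt N$. These bounds are $O(N^{-1/2})+O(u/\sqrt N)$, which is $o(1)$ uniformly for $u\le a_N$ once $a_N$ grows slowly. A Bayes risk cannot drop much under such a perturbation provided the loss is bounded. The loss is bounded on the relevant scale after truncation: one may clip the normalized estimate to $[-1,a_N+1]$, and this does not increase risk relative to the prior's support. The perturbation cost is therefore at most $O(a_N^2)\cdot\mathrm{TV}$. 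For the Gaussian location model with uniform prior on $[0,a_N]$, van Trees' inequality gives Bayes risk at least $1/(1+J/a_N^2)\to1$, where $J$ is the Fisher information of a smoothed version of the uniform prior. Combining the two bounds gives $N\kappa_N\ge 1-o(1)$, and with the upper bound, $N\kappa_N\to1$.

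\emph{Main obstacle.} The delicate step is balancing the growth of $a_N$. The truncation cost $a_N^2\cdot\mathrm{TV}$ must vanish, so $a_N$ cannot grow too fast relative to the $N^{-1/2}$ local-CLT error, while the prior must still be wide enough that van Trees gives nearly $1$. Taking $a_N=N^{1/10}$ should work.

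If the TV coupling is awkward, my fallback is to apply a discrete van Trees / Cramér--Rao argument directly to the shift-like family in $k$, by interpolating $k$ to a continuous parameter. The Fisher information of $\operatorname{Binomial}$ under the shift is about $4/N$ per unit of $\theta$, so this route would give the same constant $1$.
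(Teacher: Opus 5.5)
Your proof is correct in outline but reaches the lower bound by a different route from the paper's. The upper bound is equivalent to the paper's: $d_x=2x/N$ is exactly the endpoint rule of $\hat\beta^*$.

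Both lower bounds put a flat prior on a one-parameter family of endpoint states whose targets span many noise standard deviations. The difference is which family.
\begin{itemize}
\item The paper fixes the number of mixed pairs at $n=N-s$ and lets $k=a$ range over $\{0,\dots,s\}$, with $s=\lfloor N^{3/4}\rfloor$, absorbing the slack with $(0,0)$ pairs. Then $X=A+B$ with $B\sim\operatorname{Binomial}(n,1/2)$ independent of $A$. The Bayes risk is exactly $\frac{4}{N^2}\E[\operatorname{Var}(B\mid X)]$, the posterior of $B$ is a binomial truncated to a window of width $s$, and Hoeffding alone shows the truncation is negligible.
\item Your states $(k,N-k)$ contain no $(0,0)$ pairs. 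The noise $\operatorname{Binomial}(N-k,1/2)$ therefore changes with the parameter, and the model is only approximately a shift. You pay for this with a smoothed local-CLT comparison to $N(u,1)$, a bounded-loss perturbation bound, van Trees, and the rate constraint $a_N^3/\sqrt N\to0$.
\end{itemize}
Your route is more portable, since it needs no exact convolution structure. The paper's choice of holding $m$ fixed, however, removes every approximation step.

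If you keep your route, you need to write out two steps you glossed over.
\begin{itemize}
\item Your prior lives on the lattice $u\in N^{-1/2}\{0,\dots,K_N\}$, but van Trees needs an absolutely continuous prior. Jittering $u$ by $\mathrm{Unif}[0,N^{-1/2}]$ moves $N(u,1)$ by $O(N^{-1/2})$ in total variation, so this costs only $O(a_N^2N^{-1/2})$ more in your bounded-loss comparison.
\item Van Trees bounds the Bayes risk of a smoothed prior, not of the uniform prior you fixed. You can use the smooth prior from the outset, since any prior lower-bounds $\kappa_N$. Alternatively, write the uniform prior as $(1-\epsilon)\pi'+\epsilon\pi''$, with $\pi'$ having smooth ramps over relative width $\epsilon$. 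Concavity of Bayes risk in the prior then gives a bound of $(1-\epsilon)/\{1+O(1/(\epsilon a_N^2))\}$.
\end{itemize}
Simpler still, under a flat prior the Gaussian posterior is a truncated normal, and the paper's truncation argument applies to it directly. That makes the Gaussian detour unnecessary once $m$ is held fixed.

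Separately, in your fallback the Fisher information is about $N$ per unit of $\theta$. The value $4/N$ is per unit shift of $\E X$.
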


\begin{proof}
For $(k,m)\in\mathcal K_N$, write $X=k+\operatorname{Binomial}(m,1/2)$, $ \theta_{k,m}=\frac{2k+m}{N}$, and 
\[
    R_{k,m}(d)
    =\mathbb E\bigl[(d_X-\theta_{k,m})^2\bigr].
\]
For the upper bound, take $d_x=2x/N$. This is feasible and unbiased
since $\mathbb E[X]=k+m/2$. Consequently,
\[
    R_{k,m}(d)
    =\frac{4}{N^2}\operatorname{Var}(X)
    =\frac{m}{N^2}
    \leq \frac{1}{N}.
\]
It follows that
\[
    \limsup_{N\to\infty}N\kappa_N\leq 1.
\]

For the lower bound, let $s=\lfloor N^{3/4}\rfloor$ and $n=N-s$. Restrict the maximum in equation \eqref{eq:problem} to the feasible states
$(k,m)=(a,n)$, $a=0,\ldots,s$. Put the uniform prior on these states.
Equivalently, let $A\sim\operatorname{Unif}\{0,\ldots,s\}$ and $ B\sim\operatorname{Binomial}(n,1/2)$ be independent, and observe $X=A+B$. The target under this prior is
\[
    \theta_A=\frac{n+2A}{N}.
\]
Since maximum risk is at least average risk,
\begin{align}
    \kappa_N
    &\geq
    \inf_{d\in[0,2]^{N+1}}
    \mathbb E\bigl[(d_X-\theta_A)^2\bigr] \notag\\
    &=\mathbb E\bigl[\operatorname{Var}(\theta_A\mid X)\bigr]
      =\frac{4}{N^2}
       \mathbb E\bigl[\operatorname{Var}(B\mid X)\bigr].
    \tag{A.3}\label{A3}
\end{align}
The first equality in the second line is the usual Bayes-risk identity; the
posterior mean is feasible because $\theta_A\in[0,2]$. The final equality
follows because $A=X-B$ conditional on $X$.

We are finished upon showing that
\[
    \mathbb E\bigl[\operatorname{Var}(B\mid X)\bigr]
    =(1-o(1))\frac{n}{4},
    \tag{A.4}\label{A4}
\]
since then $\kappa_N\geq (1-o(1))\frac{n}{N^2}$ by \eqref{A3} and, since $n/N\to1$,
\[
    \liminf_{N\to\infty}N\kappa_N\geq1.
\]

To show this, write $\mu=\frac{n}{2}$, $ t=\left\lceil\sqrt{2n\log n}\right\rceil$, and $\varepsilon_n=\Pr(|B-\mu|>t)$. Hoeffding's inequality gives $\varepsilon_n\leq 2n^{-4}$. Let $J=\{x\in\mathbb{Z}:\mu+t\leq x\leq\mu+s-t\}$. Since $A$ is uniform and independent of $B$ we have, for $x\in\text{supp}(X)$,
\begin{align*}
Pr(B=b\mid X=x)&=\frac{Pr(B=b)Pr(A=x-b)}{Pr(X=x)}\\
&=\frac{Pr(B=b)\mathbf{1}\{x-s\le b \le x\}}{Pr(x-s\le B\le x)}.
\end{align*}
Therefore, with $\mathcal L(\cdot \mid \cdot)$ denoting a conditional distribution,
\[
    \mathcal L(B\mid X=x)
    =\mathcal L\bigl(B\mid x-s\leq B\leq x\bigr),
    \tag{A.5}\label{A5}
\]
For every $x\in J$, the conditioning set in \eqref{A5} contains
$[\mu-t,\mu+t]$. Its omitted probability, $Pr(C_x^c)$ for $C_x=\{x-s\le B\le x\}$, is thus at most
$\varepsilon_n$.
More generally, suppose event $C$ satisfies $q=\Pr(C^c)$. Writing
$Z=B-\mu$ and using $|Z|\leq n$, $\mathbb E [Z]=0$,  and $\mathbb E[Z^2]=\frac{n}{4}$, 
\begin{align*}
    \mathbb E[Z^2\mid C]
    &=\frac{n/4-\mathbb{E}[Z^2\mathbf{1}\{C^c\}]}{1-q}\geq \frac{n}{4}-\frac{n^2q}{1-q},\\
    \bigl|\mathbb E[Z\mid C]\bigr|
    &=\frac{\bigl|\mathbb E[Z\mathbf 1\{C^c\}]\bigr|}{1-q}
      \leq \frac{nq}{1-q}.
\end{align*}
Hence
\[
    \operatorname{Var}(B\mid C)
    \geq
    \frac{n}{4}
    -n^2\left\{\frac{q}{1-q}
                  +\frac{q^2}{(1-q)^2}\right\}.
    \tag{A.6}\label{A6}
\]
Applying \eqref{A6} with the conditioning event in \eqref{A5} and
$q\leq\varepsilon_n$ yields, uniformly over $x\in J$,
\[
    \operatorname{Var}(B\mid X=x)\geq \frac{n}{4}-o(n).
    \tag{A.7}\label{A7}
\]

Finally, if $2t\leq A\leq s-2t$ and $|B-\mu|\leq t$, then $X\in J$.
Independence therefore gives
\[
    \Pr(X\in J)
    \geq
    \left(1-\frac{4t+2}{s+1}\right)(1-\varepsilon_n)
    =1-o(1),
    \tag{A.8}\label{A8}
\]
because $t=O(\sqrt{N\log N})$ while $s\asymp N^{3/4}$.
Equations \eqref{A7}--\eqref{A8}, together with the nonnegativity of conditional variances,
give the lower bound in \eqref{A4}; the reverse bound follows from
$\mathbb E[\operatorname{Var}(B\mid X)]\leq\operatorname{Var}(B)=n/4$.

\end{proof}

\end{document}